\tikzset{
	assets/.style={
		rectangle split,
		rectangle split parts=1,
		rectangle split part fill={blue!20},
		rounded corners,
		draw=black, very thick,
		minimum height=10cm,
		minimum width=5cm,
		text width=2cm,
		text centered,
	}
}
\tikzset{
	state/.style={
		rectangle split,
		rectangle split parts=2,
		rectangle split part fill={red!30,blue!20},
		rounded corners,
		draw=black, very thick,
		minimum height=2em,
		text width=3cm,
		inner sep=2pt,
		text centered,
	}
}
\pgfplotsset{%
	every tick label/.append style = {font=\tiny},
	every axis label/.append style = {font=\scriptsize}
}
\renewenvironment{proof}[1][\proofname]{\par
	\pushQED{\qed}%
	\normalfont \topsep6\p@\@plus6\p@\relax
	\trivlist
	\item[\hskip\labelsep
	\itshape
	#1\@addpunct{:}]\ignorespaces
}{%
	\popQED\endtrivlist\@endpefalse
}
\newtheoremstyle{italic}
{15pt}
{15pt}
{\normalfont \itshape}
{}
{\normalfont \bfseries}
{.}
{ }
{}
\newtheoremstyle{nonItalic}
{15pt}
{15pt}
{}
{}
{\normalfont \bfseries}
{.}
{ }
{}
\theoremstyle{italic} 
\newtheorem{defi}{Definition}[section]
\newtheorem{prop}[defi]{Proposition}
\theoremstyle{nonItalic}
\newtheorem{rem}[defi]{Remark}
\newtheorem{exam}[defi]{Example}
\newcommand{\defgl}{\mathrel{\mathop:\!\!=}}
\newcommand{\naturalNumbers}{\mathbb{N}}
\newcommand{\realLine}{\mathbb{R}}
\newcommand*\diff{\mathop{}\!\mathrm{d}}
\newcommand{\sampleSpace}{\Omega}
\newcommand{\sigmaField}{\mathcal{F}}
\newcommand{\symbolProbabilityMeasure}{P}
\newcommand{\expectation}[1]{E[#1]}
\newcommand{\symbolLpspace}[1]{L^{#1}}
\newcommand{\Lpspace}[2]{\symbolLpspace{#1}(#2)}
\newcommand{\symbolLspace}{L^{1}}
\newcommand{\symbolLinftyspace}{L^{\infty}}
\newcommand{\Linftyspace}[1]{\symbolLinftyspace(#1)}
\newcommand{\financialPositions}{\mathcal{X}}
\newcommand{\symbolMonetaryRiskMeasure}{\rho}
\newcommand{\monetaryRiskMeasure}[1]{\symbolMonetaryRiskMeasure(#1)}
\DeclareMathOperator{\symbolOperatorValueAtRisk}{VaR}
\DeclareMathOperator{\symbolOperatorConditionalValueAtRisk}{CVaR}
\DeclareMathOperator{\symbolOperatorRangeValueAtRisk}{RVaR}
\DeclareMathOperator{\symbolOperatorExpectileValueAtRisk}{EVaR}
\DeclareMathOperator{\symbolOperatorExpectile}{e}
\newcommand{\symbolValueAtRisk}[1]{\symbolOperatorValueAtRisk_{#1}}
\newcommand{\symbolConditionalValueAtRisk}[1]{\symbolOperatorConditionalValueAtRisk_{#1}}
\newcommand{\symbolRangeValueAtRisk}[1]{\symbolOperatorRangeValueAtRisk_{#1}}
\newcommand{\valueAtRisk}[2]{\symbolValueAtRisk{#1}(#2)}
\newcommand{\conditionalValueAtRisk}[2]{\symbolConditionalValueAtRisk{#1}(#2)}
\newcommand{\rangeValueAtRisk}[2]{\symbolRangeValueAtRisk{#1}(#2)}
\DeclareMathOperator*{\essinf}{ess\,inf}
\DeclareMathOperator*{\esssup}{ess\,sup}
\newcommand{\level}{\alpha}
\newcommand{\expectile}[2]{\symbolOperatorExpectile_{#1}(#2)}
\newcommand{\expectileVaR}[2]{\symbolOperatorExpectileValueAtRisk_{#1}(#2)}
\newcommand*\halfcirc[1][1ex]{%
	\begin{tikzpicture}
		\draw[fill] (0,0)-- (225:#1) arc (225:405:#1) -- cycle ;
		\draw (0,0) circle (#1);
\end{tikzpicture}}
\newcommand{\controlbo}{*+<.01em>{\halfcirc[0.6ex]}}
\newcommand{\ctrlbo}[1]{\controlbo \qwx[#1] \qw}
\title{\normalfont Quantum Risk Analysis: Beyond (Conditional) Value-at-Risk}
\author[a]{Christian Laudagé\thanks{\textit{E-mail:} christian.laudage@rptu.de (corresponding author).}}
\author[b]{Ivica Turkalj\thanks{\textit{E-mail:} ivica.turkalj@itwm.fraunhofer.de.}}
\affil[a]{Department of Mathematics, RPTU Kaiserslautern-Landau, Gottlieb-Daimler-Straße 47, 67663 Kaiserslautern, Germany}
\affil[b]{Department of Financial Mathematics, Fraunhofer Institute for Industrial Mathematics ITWM, Fraunhofer-Platz 1, 67663 Kaiserslautern, Germany}
\date{\today}
\begin{document}

	\pagenumbering{arabic}
	\numberwithin{equation}{section}\numberwithin{equation}{section}
	
	\maketitle
	
	\begin{abstract} 
		
		Risk measures are important key figures to measure the adequacy of the reserves of a company. The most common risk measures in practice are Value-at-Risk (VaR) and Conditional Value-at-Risk (CVaR). Recently, quantum-based algorithms are introduced to calculate them. These procedures are based on the so-called quantum amplitude estimation algorithm which lead to a quadratic speed up compared to classical Monte-Carlo based methods.
			
		Based on these ideas, we construct quantum-based algorithms to calculate alternatives for VaR and CVaR, namely the Expectile Value-at-Risk (EVaR) and the Range Value-at-Risk (RVaR). We construct quantum algorithms to calculate them. These algorithms are  based on quantum amplitude estimation. In two case studies, we compare their performance with the quantum-based algorithms for VaR and CVaR. We find that all of the algorithms perform sufficiently well on a quantum simulator. Further, the calculations of EVaR and VaR are robust against noise on a real quantum device. This is not the case for CVaR and RVaR. 
		
		\medskip
		
		\item[\hskip\labelsep\scshape Keywords:] Risk measures, expectiles, Range Value-at-Risk, quantum computing, quantum amplitude estimation
	
	\end{abstract}
	
	\section{Introduction} 
	
	Risk measures are important key figures used by financial institutions to assess their financial positions. The most common risk measures used in practice are the Value-at-Risk and the Conditional Value-at-Risk. In most of the cases, they are calculated by time-consuming Monte-Carlo based methods. In contrast,~\textcite{woerner_quantum_2019} use the so-called quantum amplitude estimation to calculate the Value-at-Risk and the Conditional Value-at-Risk. Compared to classical Monte-Carlo methods, this approach promises a quadratic reduction of the computation time. 
	
	There are reasonable alternatives for the Value-at-Risk and Conditional Value-at-Risk and to the best of our knowledge, they have not been calculated on a quantum computer until now. We would like to fill this gap for the following two alternatives: First, the Expectile Value-at-Risk and second, the Range Value-at-Risk. The Expectile Value-at-Risk is based on the statistical concept of expectiles. 
	
	\textbf{State-of-the-art:} Expectiles are introduced originally in~\textcite{newey_asymmetric_1987}. An expectile can be seen as a generalized mean and the term ``expectile'' is a combination of the words expectation and quantile. Hence, it is a compromise between Value-at-Risk and Conditional Value-at-Risk. Further, it is a special case of a shortfall risk measure as defined in, e.g.,~\textcite{follmer_stochastic_2016}. 
	
	Expectiles has been analyzed in many real world situations, in which they are meaningful alternatives for Value-at-Risk and Conditional Value-at-Risk. \textcite{bellini_risk_2017} compare expectiles with classical risk measures for the S\&P500 Index. As they mentioned, their  ``Theoretical and numerical results indicate that expectiles are perfectly reasonable alternatives to VaR and ES'', see~\textcite[page 487]{bellini_risk_2017} as well as the references therein. \textcite{KUAN2009261} tested expectiles for stock market data, namely the S\&P500 and the Nasdaq index. But also for other fields of applications expectiles are relevant, e.g.,~\textcite{DEROSSI2009179} apply expectiles in a context with motorcycle data. Beside their direct usage as risk measures, expectiles can also be used to estimate Value-at-Risk and Expected Shortfall, see~\textcite{taylor_2008}, i.e.,~they are also useful, even if other risk measures are applied.  

	Furthermore, unlike Value-at-Risk and Conditional Value-at-Risk, expectiles admit some desirable theoretical properties. They are the only law-invariant coherent risk measures that are elicitable, see~\textcite{bellini_risk_2017} or~\textcite{ziegel_2016}.	Furthermore, expectiles are the only $M$-quantiles (which are all elicitable) that are coherent, see~\textcite{bellini_generalized_2014}.
	
	In contrast to the Value-at-Risk, the Conditional Value-at-Risk takes the amount of losses into account for calculating capital requirements. But the latter is not qualitatively robust, i.e.,~if there are small perturbations in the law of the underlying data, then the law of the Conditional Value-at-Risk estimator can change significantly. A qualitatively robust risk measure that takes the amount of losses into account is the Range Value-at-Risk, which admits the Value-at-Risk and the Conditional Value-at-Risk as special cases. It is introduced in~\textcite{cont_robustness_2010}. \textcite{Righi_Mueller_2023} test among other risk measures the Range Value-at-Risk for the S\&P500 index. In the context of risk sharing, the usage of Range Value-at-Risk is discussed in~\textcite{embrechts_quantile-based_2020}. As mentioned by~\textcite[page 25]{fissler_elicitability_2021}, ``Range value at risk (RVaR) is a natural interpolation between VaR and ES, constituting a tradeoff between the sensitivity of ES and the robustness of VaR, turning it into a practically relevant risk measure on its own''. Since the estimation of the RVaR is an issue on its own in practice,~\textcite{computation11020028} compare different estimators for RVaR.
	
	We suggest procedures for the calculation of expectiles and Range Value-at-Risk by using quantum amplitude estimation. For an introduction to quantum amplitude estimation we refer to~\textcite{kaye_introduction_2007}. The application of quantum computing to solve tasks in financial mathematics is quite new. For an overview we refer to~\textcite{egger_quantum_2020}. Option pricing with the help of amplitude estimation is performed in~\textcite{stamatopoulos_option_2020} and~\textcite{chakrabarti_threshold_2021}. Market risk of financial derivatives are determined via quantum computing in~\textcite{stamatopoulos_towards_2022}. Algorithms for calculating risk measures are developed in~\textcite{woerner_quantum_2019}. These algorithms are applied to calculate credit risk in~\textcite{egger_credit_2021}. \textcite{stamatopoulos2024quantumriskanalysisfinancial} develop new algorithms to calculate risk measures via quantum signal processing (QSP). An application of QSP reduces the required resources, which is demonstrated in the context of financial derivative pricing in~\textcite{Stamatopoulos_2024}.
	
	\textbf{Methodology:} An expectile can be formulated as the root of a specific function including an expectation. The idea of computing expectiles by a quantum algorithm stems from the tractability of this function on a quantum computer. Hence, we calculate this function by quantum amplitude estimation and perform a root search algorithm on a classical computer. In contrast, the Range Value-at-Risk is directly calculated as the result of a quantum amplitude estimation.
	
	\textbf{Contributions of this manuscript:}
	First, we transfer the ideas to calculate Value-at-Risk and Conditional Value-at-Risk in~\textcite{woerner_quantum_2019} to build up new algorithms for expectiles and Range Value-at-Risk.  Here, the construction of the operators used by the quantum amplitude estimation is new, especially in the case of the expectiles, compare with Equation~(\ref{eq:expectiles_equivalent_formulation}),~\prettyref{sec:bisectionExpectiles} and~\prettyref{sec:quantum_algorithm_expectiles}. Second, we calculate these risk measures, as well as, Value-at-Risk and Conditional Value-at-Risk for two case studies: The first one carries the situation of a credit risk portfolio from a financial institution. In the second one, we calibrate lognormal and gamma distributions towards a real world data set from the insurance business and calculate the mentioned risk measures as well as compare their performances.
	
	For the first case study, we use a canonical quantum amplitude estimation and we find that an increasing number of ancilla qubits leads to an accuracy towards the true value in case of the expectiles. In contrast, the accuracy in case of the Range Value-at-Risk is less accurate. The example is based on two assets. In~\prettyref{rem:scalingRealWorldProblem} we argue that for more assets we can use known numbers from literature to extrapolate the circuit depth. 

	In the second case study we find that the algorithms converge in an adequate manner depending on the number of qubits used for loading the distributions. Furthermore, the efficiency of noisy intermediate-scale quantum devices are evaluated by applying different variants of the quantum amplitude estimation on a noisy simulator. The expectile calculation is robust against the noise of a real quantum hardware and an iterative variant of the quantum amplitude estimation leads to the lowest estimation errors. In contrast, the results for the Range Value-at-Risk are significantly affected by the noise from a quantum device.
	
	\textbf{Structure of this manuscript:} In~\prettyref{sec:definition_risk_measures}, we introduce the risk measures of interest. In~\prettyref{sec:bisection}, we state the algorithms for expectiles and Range Value-at-Risk. In~\prettyref{sec:quantum_algorithm}, the implementation of the operators used in the quantum amplitude estimation is described. In~\prettyref{sec:numerical_study}, we compare the performance of the algorithms for Value-at-Risk, Conditional Value-at-Risk, expectiles and Range Value-at-Risk via two numerical case studies. A conclusion and outlook is given in~\prettyref{sec:conclusion}. For the sake of completeness, the mathematical background of the analyzed risk measures and properties of expectiles that are important for applying the root search algorithm are stated in Appendix~\ref{sec:mathematicalBackground}. Information on the construction of the circuits to load a probability density function are postponed to Appendix~\ref{sec:distributionLoading}. Additional figures are shifted to Appendix~\ref{sec:figures}.

	\section{Risk measures}\label{sec:definition_risk_measures}
	
	We start by explaining the idea behind a risk measure. In practice, risk measures are used as key figures to describe the risk of a future unknown financial position. The value of a risk measure can then be interpreted as a capital reserve.
	
	\textcite{woerner_quantum_2019} discussed the Value-at-Risk (VaR) and the Conditional Value-at-Risk (CVaR). The latter coincides with the so-called Expected Shortfall (ES) in case of a continuous random variable. We aim to discuss two more risk measures, namely the Expectile Value-at-Risk (EVaR) and the Range Value-at-Risk (RVaR).
	
	In the following, we explain how they are calculated for a continuous random variable $X$ and mention their main characteristics. Probabilities are denoted by $\symbolProbabilityMeasure(.)$ and expectations are denoted by $\expectation{.}$. For completeness, we state the mathematical definitions in Appendix~\ref{sec:mathematicalBackground}. \\

	\textbf{VaR:} The VaR at level $0<\lambda<1$ is the negative value of the $\lambda$-quantile of the random variable $X$. We denote this value by $\valueAtRisk{\lambda}{X}$. Hence, it holds that $$\symbolProbabilityMeasure\left(X\leq -\valueAtRisk{\lambda}{X}\right) = \symbolProbabilityMeasure\left(X+\valueAtRisk{\lambda}{X}\leq 0\right) = \lambda.$$ 
	
	So, by using the VaR as capital reserve, the value of the total position $X+\valueAtRisk{\lambda}{X}$ is only in $\lambda$-percent of the cases negative, i.e.,~in only $\lambda$-percent of the cases, a loss occurs.
	
	But, the VaR does not encourage the diversification of a portfolio in general, i.e.,~diversification does not lead to a decrease of the capital reserve in general. Further, the VaR does not take the amount of losses into account. To avoid these undesirable effects, the CVaR became quite popular in practice. \\
	
	\textbf{CVaR:} The CVaR at level $0<\lambda<1$ takes the amount of losses into account in the sense that it is the negative conditional expectation of the random variable $X$, conditioned on the event that the value of $X$ is lower than $-\valueAtRisk{\lambda}{X}$. We denote this value by $\conditionalValueAtRisk{\lambda}{X}$ and it is given by
	\begin{align*}
		\conditionalValueAtRisk{\lambda}{X} = \expectation{-X\,|\,X \leq -\valueAtRisk{\lambda}{X}}.
	\end{align*} 
	
	So, the CVaR gives us the negative average of the $\lambda$-percent worst scenarios of $X$. \\ 
		
	\textbf{EVaR:} The EVaR is a valid alternative to the commonly used VaR and CVaR. The EVaR is the negative value of a so-called expectile. The word expectile is a combination of the words expectation and quantile. As stated by~\textcite{bellini_risk_2017}, ``expectiles can be seen as an asymmetric generalization of the mean'' and we can interprete them as a compromise between VaR and CVaR. We use the following quote from~\textcite{bellini_risk_2017} to motivate that expectiles are good alternatives for VaR and CVaR:
	\begin{quote}
		``Theoretical and numerical results indicate that expectiles are perfectly reasonable alternatives to VaR and ES risk measures.''
	\end{quote}
	
	The expectile at level $0<\alpha<1$ of the random variable $X$ is denoted by $\expectile{\alpha}{X}$ and it is the solution of the following equation:
	\begin{align}\label{eq:expectiles}
		\level\expectation{\max\{X-\expectile{\level}{X},0\}}=(1-\level)\expectation{-\min\{X-\expectile{\level}{X},0\}}.
	\end{align}
	
	A good interpretation of this value is given in the case of $\expectation{|X-\expectile{\level}{X}|}\neq 0$. Then, we can rewrite the previous equation as
	\begin{align*}
		\alpha = \frac{\expectation{-\min\{X-\expectile{\level}{X},0\}}}{\expectation{|X-\expectile{\level}{X}|}}.
	\end{align*}
	Hence, the expectile is the value for which the expected loss of  $X-\expectile{\alpha}{X}$ only accounts for $\alpha$-percent of the total deviations of $X-\expectile{\alpha}{X}$ from zero. So, we can interprete $-\expectile{\alpha}{X}$ as a capital reserve and this is exactly the EVaR, which we denote by $\expectileVaR{\level}{X}= -\expectile{\level}{X}$.\\
	
	\textbf{RVaR:} An important property for the application of risk measures is qualitative robustness. It describes roughly that small changes in the law of the observed data points should not lead to drastic changes in the law of the risk estimator. Changes are measured by an appropriate metric. For a precise definition of qualitative robustness in the context of risk measures we refer to~\textcite{cont_robustness_2010, kratschmer_comparative_2014, koch-medina_qualitative_2022}.
	
	The VaR is qualitatively robust, while the CVaR is not. But the CVaR takes the magnitude of losses into account. As a compromise, one can use the RVaR, which admits desirable robustness properties in the sense of~\textcite{cont_robustness_2010} and takes the amount of losses partly into account.
	
	The RVaR is defined for two levels $0<\alpha<\beta<1$. We denote it by $\rangeValueAtRisk{\alpha,\beta}{X}$ and it is given by
	\begin{align*}
		\rangeValueAtRisk{\alpha,\beta}{X} = \expectation{-X\,|\,-\valueAtRisk{\alpha}{X}\leq X \leq  -\valueAtRisk{\beta}{X}}.
	\end{align*} 
	
	In contrast to the CVaR, we only consider a specific part of the tail of the distribution of $X$. This makes the RVaR robust against perturbations in the underlying data.

	\section{Algorithms}\label{sec:bisection}
	
	In this section, we sketch the algorithms that are used to calculate the EVaR (or rather expectiles) and the RVaR. Additionally, we point out at which parts in these algorithms we apply quantum computing. Details about the used quantum techniques are given later in~\prettyref{sec:quantum_algorithm}. 
	
	For the expectiles we perform a bisection search algorithm on a classical device and in every iteration step, the objective function is calculated on a quantum computer. This is motivated by the procedure for the VaR in~\textcite{woerner_quantum_2019}. 
	
	For the RVaR, we perform four operations on the quantum computer. The first two are used to obtain the VaR-values for the specified levels. The third gives us the probability to lie between these two VaR-values and the last one calculates the conditional expectation which coincides with the RVaR. This idea origins from the procedure for the CVaR in~\textcite{woerner_quantum_2019}. 
	
	\subsection{Algorithm: Expectiles}\label{sec:bisectionExpectiles}
	
	Our aim is to obtain an expectile via a root search algorithm. Note that the solution of Equation~\prettyref{eq:expectiles} is unique, see e.g.,~\textcite[Proposition 1]{bellini_generalized_2014}. Then, for the implementation on the quantum computer we use the following equivalent representation of~\prettyref{eq:expectiles} in the case of $\alpha\geq\frac{1}{2}$ with $\beta=\frac{2\level-1}{1-\level}$:
	\begin{align}\label{eq:expectiles_equivalent_formulation}
		\expectile{\level}{X} = \expectation{\max\{(1+\beta)X-\beta\expectile{\level}{X},X\}}.
	\end{align}
	
	This representation follows by using the fact that $X=\max\{X,0\}-\max\{-X,0\}$. 
 	Note, in the case of $\alpha <\frac{1}{2}$, we can also apply the previous representation by using the fact that $\expectile{\alpha}{X} = -\expectile{1-\alpha}{-X}$. 
	
	Motivated by the right-hand side of the previous equation, let us introduce the following function for $\alpha\geq \frac{1}{2}$: 
	\begin{align}\label{eq:auxiliary_function_expected_payoff}
		h_{X,\level}:\realLine\rightarrow\realLine,x\mapsto\expectation{\max\left\{\left(1+\beta\right)X-\beta x,X\right\}}.
	\end{align}
	
	For $x=\expectile{\level}{X}$ the value of $h_{X,\level}$ is equal to the right hand side in~\prettyref{eq:expectiles_equivalent_formulation}. Therefore, performing a bisection search algorithm until $h_{X,\level}\left(x\right)\approx x$ gives us an approximation of the expectile $\expectile{\level}{X}$. In the end, we determine $h_{X,\level}$ via a quantum algorithm.
	
	This procedure leads to Algorithm 1. In there, we marked in red the expressions that are calculated using a quantum computer, as described in~\prettyref{sec:quantum_algorithm_expectiles}. In this algorithm, $a$ and $b$ are the minimum and maximum values of the discretized version of the random variable (compare with~\prettyref{sec:quantum_algorithm}). $N$ is the maximum number of iterations for the bisection search algorithm. This algorithm should stop even earlier by reaching prespecified tolerance values $\epsilon,\delta>0$.
	
	\begin{algorithm}
		\textbf{Algorithm 1} Expectiles
		
		\rule[1ex]{\linewidth}{0.4pt}
		\begin{algorithmic}
			\State $x_1 \gets {\color{BrickRed}h_{X,\level}(a)} - a$
			\State $x_2 \gets {\color{BrickRed}h_{X,\level}(b)} - b$
			\For{$i = 1$ to $N$}
			\State $x \gets \frac{x_1 + x_2}{2}$
			\State $y \gets {\color{BrickRed}h_{X,\level}(x)} - x$
			\If{$\abs{y}<\epsilon$ or $\abs{\frac{x_2-x_1}{2}}<\delta$}
			\State $\expectile{\level}{X} \gets x$
			\State \textbf{break}
			\EndIf
			\If{y>0}
			\State $x_1 \gets x$
			\Else
			\State $x_2 \gets x$
			\EndIf
			\EndFor
		\end{algorithmic}
	\end{algorithm}
	
	In order that the bisection search algorithm is well-defined, we have to check the following two properties of the map $h_{X,\level}$: First, we have to prove that it is possible to choose the starting values $x_1$ and $x_2$ in Algorithm 1 such that they have different signs. Second, the map $x\mapsto h_{X,\level}(x)-x$ has to be continuous. For the sake of brevity, we shift these theoretical results to Appendix~\ref{sec:mathematicalBackground}.
	
	\subsection{Algorithm: Range Value-at-Risk}
	
	Recall that for a continuous random variable $X$ the RVaR at levels $\level$ and $\beta$ is the following conditional expectation:
	\begin{align}\label{eq:conditionalExpectationRVaR}
		\rangeValueAtRisk{\level,\beta}{X}=\expectation {-X\,|\,-\valueAtRisk{\alpha}{X}\leq X \leq -\valueAtRisk{\beta}{X}}.
	\end{align} 
	
	We can calculate this conditional expectation directly on the quantum computer. This is illustrated by Algorithm 2. Again, we marked in red the values that are calculated via quantum computing. For $x_1$, $x_2$ and $p$, we use the procedures described in~\textcite{woerner_quantum_2019}. 
	After the calculation of $x_1$, $x_2$ and $p$ we do not have to use an iterative procedure to obtain the RVaR.
	For the calculation of the RVaR see~\prettyref{sec:quantum_algorithm_rvar}. In there, it is also explained how the probability $p$ enters the calculation of the RVaR.
	\begin{algorithm}
		\textbf{Algorithm 2} Range Value-at-Risk
		
		\rule[1ex]{\linewidth}{0.4pt}
		\begin{algorithmic}
			\State $x_1 \gets {\color{BrickRed}\valueAtRisk{\alpha}{X}}$
			\State $x_2 \gets {\color{BrickRed}\valueAtRisk{\beta}{X}}$
			\State $p \gets {\color{BrickRed}\symbolProbabilityMeasure(-x_1\leq X\leq -x_2)}$
			\State $\rangeValueAtRisk{\alpha,\beta}{X} \gets {\color{BrickRed}\expectation {-X\,|\,-x_1\leq X\leq -x_2}}$ \Comment{Calculation requires knowledge of $p$}
		\end{algorithmic}
	\end{algorithm}

	\section{Implementation on a quantum computer}\label{sec:quantum_algorithm}
	
	In this section, we describe the implementation of the operators used by a quantum algorithm to estimate the expectations in~\prettyref{eq:auxiliary_function_expected_payoff} and~\prettyref{eq:conditionalExpectationRVaR}. The quantum algorithm that we use is the so-called amplitude estimation algorithm. In~\prettyref{sec:qae} we discuss important facts of different versions of amplitude estimation. Then in Sections~\ref{sec:quantum_algorithm_expectiles} and~\ref{sec:quantum_algorithm_rvar} we state details for the amplitude estimation for expectiles and RVaR.
	
	\subsection{Quantum amplitude estimation}\label{sec:qae}
	
	For a detailed introduction to quantum amplitude estimation we refer to~\textcite[Chapter 8]{kaye_introduction_2007}. We illustrate the canonical quantum circuit of quantum amplitude estimation (QAE) in~\prettyref{fig:amplitude_estimation_circuit}, compare with, e.g.,~\textcite{brassard_quantum_2000}. 
	
	\begin{figure}[!ht]
		\[
		\begin{array}{c}
			\Qcircuit @C=1.5em @R=0.75em {
				&\lstick{(m-1)\quad \ket{0}_{\ }}&\gate{H}& \qw & \qw & & & \ctrl{4} & \multigate{3}{\text{QFT}^{-1}} & \meter\\
				&  & \vdots &  & & & &  & & \vdots \\
				&  &  &  &  & & & & & \\
				&\lstick{(0)\quad \ket{0}_{\ }}  & \gate{H}  & \ctrl{1} & \qw & & & \qw &\ghost{\text{QFT}^{-1}} & \meter\\ 
				&\lstick{\ket{i}_{n}}&\multigate{2}{\mathcal{A}}&\multigate{2}{Q^0}& \qw & \cdots &  & \multigate{2}{Q^{m-1}} & \qw \\
				&\lstick{\ket{c}_{\ }}  &\ghost{\mathcal{A}}  & \ghost{Q^0} & \qw & & & \ghost{Q^{m-1}} & \qw\\
				&\lstick{\ket{0}_{\ }}  &\ghost{\mathcal{A}}  & \ghost{Q^0} &\qw & & & \ghost{Q^{m-1}} & \qw
			}
		\end{array}
		\]
		\caption{Quantum circuit of amplitude estimation. The distribution of $X$ is loaded into the $\ket{i}_{n}$ register, which is initially in state $\ket{0}_{n}$. Loading the distribution is part of the operator $\mathcal{A}$. The $\ket{c}$ register is an ancilla qubit,  initially in state $\ket{0}$, which is used for a comparator circuit included in $\mathcal{A}$. This comparator circuit is also based on $n$ additional ancilla qubits. For the sake of brevity, we omit them here. $H$ denotes the Hadamard gate and $\text{QFT}^{-1}$ the inverse of the quantum Fourier transform.}
		\label{fig:amplitude_estimation_circuit}
	\end{figure}
	
	For the sake of convenience, we explain the idea behind QAE. To do so,  assume that $X$ is a discrete random variable attaining almost surely the values $\{0,1,\dots,2^n-1\}$, where $n\in\naturalNumbers$. Further, $X$ attains the value $i\in\{0,1,\dots,2^n-1\}$ with probability $p_i$. Then, we can load the distribution of $X$ by using $n$ qubits (analogously to bits on a classical computer) on a quantum computer. This is done by applying the following operator $\mathcal{R}$ to the first $n$ qubits:
	\begin{align*}
		\mathcal{R}\ket{0}_n = \sum_{i = 0}^{2^n-1}\sqrt{p_i}\ket{i}_n.
	\end{align*} 
	
	For details on our used implementation to construct the operator $\mathcal{R}$, see Appendix~\ref{sec:distributionLoading}.
	
	Assuming a function $f:\{0,1,\dots,2^n-1\}\rightarrow[0,1]$, we can construct a new operator $\mathcal{C}$ on the first $n+1$ qubits such that
	\begin{align*}
		\mathcal{C}\ket{i}_n\ket{0} = \ket{i}_n\left(\sqrt{1-f(i)}\ket{0}+\sqrt{f(i)}\ket{1}\right).
	\end{align*}
	
	By applying the operator $\mathcal{R}$ to the first $n$ qubits and then the operator $\mathcal{C}$ to the first $n+1$ qubits, we obtain the following state
	\begin{align*}
		\sum_{i = 0}^{2^n-1}\sqrt{(1-f(i))p_i}\ket{i}_n\ket{0}+\sum_{i = 0}^{2^n-1}\sqrt{f(i)p_i}\ket{i}_n\ket{1}.
	\end{align*}
	
	This is also the form of the state that we obtain by performing the operator $\mathcal{A}$  in~\prettyref{fig:amplitude_estimation_circuit}. Now, QAE allows us to approximate the probability of the last qubit to be in state $\ket{1}$. This probability is given by
	\begin{align*}
		\sum_{i = 0}^{2^n-1}f(i)p_i = \expectation{f(i)}.
	\end{align*}
	
	Hence, by an appropriate choice of the map $f$, the QAE can be used to obtain estimators for the expectations in Equations~\prettyref{eq:auxiliary_function_expected_payoff} and~\prettyref{eq:conditionalExpectationRVaR}. 
	
	The concrete forms of the operator $\mathcal{A}$ for expectiles and RVaR are described in Sections~\ref{sec:quantum_algorithm_expectiles} and~\ref{sec:quantum_algorithm_rvar}. But before, we would like to mention that the quantum Fourier transform (QFT) and the controlled $Q^j$-gate operations (Grover operations) in~\prettyref{fig:amplitude_estimation_circuit} lead to deep circuits with a high number of CNOT-gates. This circumstance is a challenging task for noisy intermediate-scale quantum devices. As a result, the design of QAE variants that achieve Grover-type speed up without the use of QFT and the series of controlled $Q^j$-gates has become an active area of research~\parencite{aaronson_quantum_2019,grinko_iterative_2021,suzuki_amplitude_2020,nakaji_faster_2020}.
	In addition to the canonical QAE, we also include Maximum Likelihood QAE (MLQAE) as in~\textcite{suzuki_amplitude_2020} and Iterative QAE (IQAE) as in~\textcite{grinko_iterative_2021} in our analysis. 
	
	MLQAE does not use QFT and the controlled $Q^j$-gates. Instead, it uses different (non-controlled) powers of $Q$ to construct a sufficiently large sample statistic from which the desired amplitude is estimated.
	The repeated measurement of the state $Q^{k} \mathcal{A} \ket{0}$ can be interpreted as a Bernoulli process for which the unknown hitting probability $p_{k}$ can be converted into the searched amplitude. The parameters $p_{k}$ are estimated for exponentially increasing powers $k \in \{0,2,2^2,\ldots,2^{m-1}\}$ by the maximum likelihood method and combined to a final result.
	
	IQAE approximates the amplitude by constructing confidence intervals that contain the target parameter with a given probability.
	This is done in an iterative process which reduces the length of the  interval in each step until the desired approximation accuracy is achieved. As for the MLQAE, the QFT and the controlled $Q^j$-gates are not required. Instead, carefully chosen (non-controlled) $Q^k$-gates are used to construct tighter interval bounds.
	
	We compare these variants in the case study in~\prettyref{sec:numerical_study}. Further, all variants use the same operator $\mathcal{A}$ as input. Hence, we are allowed to restrict our attention until the end of this section to the construction of $\mathcal{A}$.\newline
	
	For the implementation on a quantum computer we assume that the support of a given random variable $X$ is the set $\{a,a+b,\dots,a+(2^{n}-1)b\}$ for some constants $a\in\realLine$, $b\in(0,\infty)$ and $n\in\naturalNumbers$. We describe this support by a quantum register $\ket{i}_n$ based on $n$ qubits. For every $i\in\{0,\dots,2^{n}-1\}$ we denote the probability that $X$ is equal to $a+bi$ by $p_i$. We define $S\defgl\{0,\dots,2^{n}-1\}$ and fix an arbitrary point $i^{*}\in S$. Further, define the map
	\begin{align*}
		g_X:S\rightarrow\left\{a,a+b,\dots,a+\left(2^{n}-1\right)b\right\},i\mapsto a+bi
	\end{align*}
	and set $x^{*}\defgl g_X(i^{*})$.
	
	\subsection{Operator $\mathcal{A}$ for expectiles}\label{sec:quantum_algorithm_expectiles}
	
	In this section, we fix a level $ \frac{1}{2}\leq\alpha<1$ and set $\beta=\frac{2\level-1}{1-\level}$.  
	
	To obtain the operator $\mathcal{A}$ in~\prettyref{fig:amplitude_estimation_circuit} we rewrite the integrand in~\prettyref{eq:auxiliary_function_expected_payoff} by defining the following function in dependence of $x$, representing a realization of $X$:
	\begin{align*}
		f_{x^{*}}(x)\defgl
		\begin{cases}
			x &,x<x^{*},\\
			x+\beta x-\beta x^{*} &,x\geq x^{*}.
		\end{cases}
	\end{align*}
	
	To obtain a map with domain $S$ we define for each $i\in S$:
	\begin{align*}
		f_{i^{*}}(i)\defgl (f_{x^{*}}\circ g_X)(i)=
		\begin{cases}
			bi + a &,i<i^{*},\\
			bi + a + \beta b i - \beta b i^{*} &,i\geq i^{*}.
		\end{cases}
	\end{align*}
	
	Following the notations in~\textcite{stamatopoulos_option_2020}, we set $f_{i^{*},\min}\defgl\min\limits_{i\in S}f_{i^{*}}(i)$ and $f_{i^{*},\max}\defgl\max\limits_{i\in S}f_{i^{*}}(i)$, as well as
	\begin{align*}
		\tilde{f}_{i^{*}}(i)\defgl 2\frac{f_{i^{*}}(i)-f_{i^{*},\min}}{f_{i^{*},\max}-f_{i^{*},\min}}-1.
	\end{align*}
	
	Then, for a scaling parameter $\gamma\in\left[0,1\right]$ we have 
	\begin{align*}
		\gamma\tilde{f}_{i^{*}}(i)+\frac{\pi}{4}=
		\begin{cases}
			g_{i^{*},0}(i) &,i<i^{*},\\
			g_{i^{*},0}(i)+g_{i^{*},1}(i)&,i\geq i^{*},
		\end{cases}
	\end{align*}
	with
	\begin{align*}
		&g_{i^{*},0}(i)\defgl  2\gamma\left(\frac{1}{(1+\beta)(2^{n}-1)-\beta  i^{*}}\right)i-\gamma+\frac{\pi}{4},\\
		&g_{i^{*},1}(i)\defgl  2\gamma\left(\frac{\beta }{(1+\beta)(2^{n}-1)-\beta  i^{*}}\right)i-2\gamma\left(\frac{\beta }{(1+\beta)(2^{n}-1)-\beta  i^{*}}\right)i^{*}.
	\end{align*}
	
	The operator $\mathcal{A}$ is constructed in the following manner: First, load the distribution of $X$ into the $\ket{i}_{n}$ register. Secondly, perform a comparator such that the qubit $\ket{c}$ is in state $\ket{1}$ if $i\geq i^{*}$ or $\ket{0}$ if $i<i^{*}$. Finally, perform the multi-controlled y-rotations illustrated in~\prettyref{fig:multi_controlled_y_rotation}, in which a single-qubit y-rotation with respect to an angle $\theta$ is given by the following unitary matrix:
	\begin{align*}
		R_{y}(\theta)\defgl
		\begin{pmatrix}
			\cos(\theta/ 2) & -\sin(\theta/ 2)\\
			\sin(\theta/ 2) & \cos(\theta/ 2)
		\end{pmatrix}.
	\end{align*}
	
	\begin{figure}[!ht]
		\[
		\begin{array}{c}
			\Qcircuit @C=1.5em @R=1.2em {
				&\lstick{\ket{i}_{n}}& \ctrl{2} & \ctrl{1} & \qw \\
				&\lstick{\ket{c}_{\,\,\,}}& \qw & \ctrl{1} & \qw \\
				&\lstick{\ket{0}_{\ }}& \gate{R_y(2g_{i^{*},0}(i))} & \gate{R_y(2g_{i^{*},1}(i))} & \qw 
			}
		\end{array}
		\]
		\caption{Circuit of multi-controlled y-rotations to describe the payoff function.}
		\label{fig:multi_controlled_y_rotation}
	\end{figure}
	
	The operator $\mathcal{A}$ maps the initial state $\ket{0}_{n}\ket{0}\ket{0}$ of the $n+2$ qubits to the following state:
	\begin{align*}
		&\sum\limits_{i<i^{*}}\sqrt{p_i}\ket{i}_{n}\ket{0}\Big(\cos\big(g_{i^{*},0}(i)\big)\ket{0}+\sin\big(g_{i^{*},0}(i)\big)\ket{1}\Big)\\
		&\quad +\sum\limits_{i\geq i^{*}}\sqrt{p_i}\ket{i}_{n}\ket{1}\Big(\cos\big(g_{i^{*},0}(i)+g_{i^{*},1}(i)\big)\ket{0}+\sin\big(g_{i^{*},0}(i)+g_{i^{*},1}(i)\big)\ket{1}\Big).
	\end{align*}
	
	After applying the amplitude estimation to this operator we obtain an estimation of the probability that the last qubit is in state $\ket{1}$. This probability is given by
	\begin{align*}
		\sum\limits_{i<i^{*}}p_i\Big(\sin\big(g_{i^{*},0}(i)\big)\Big)^{2} + \sum\limits_{i\geq i^{*}}p_i\Big(\sin\big(g_{i^{*},0}(i)+g_{i^{*},1}(i)\big)\Big)^{2}.
	\end{align*}
	
	Applying the approximation
	\begin{align*}
		\Bigg(\sin\left(\gamma\tilde{f}_{i^{*}}(i)+\frac{\pi}{4}\right)\Bigg)^{2} \approx \gamma\tilde{f}_{i^{*}}(i)+\frac{1}{2},
	\end{align*}
	to this probability, we obtain an estimator for an affine transformation of $h_{X,\level}\left(x\right)$.
	
	\begin{exam}[Normal distribution]\label{exam:normal}
		We assume a normal distributed random variable $Y$ with mean $\mu\in\realLine$ and standard deviation $\sigma\in(0,\infty)$. We denote by $\varphi$, respectively $\Phi$, the probability density function, respectively the cumulative distribution function, of a standard normal distributed random variable. With this example, we would like to demonstrate the concrete form of the map in~\prettyref{eq:auxiliary_function_expected_payoff}, which is given for each $y\in\realLine$ by
		\begin{align*}
			h_{Y,\level}(y)=\mu + \beta\left(1-\Phi\left(\frac{y-\mu}{\sigma}\right)\right)(\mu-y) + \beta\sigma\varphi\left(\frac{y-\mu}{\sigma}\right).
		\end{align*}
		To obtain an estimator for this expression we use a random variable $X$ with support given by $\left\{\mu-3\sigma,\mu-3\sigma +\frac{6\sigma}{2^{n}-1},\dots,\mu+3\sigma\right\}$ and probabilities obtained from the distribution of $Y$. The function $f_{i^{*}}$ is defined by the parameters $a=\mu-3\sigma$ and $b=\frac{6\sigma}{2^{n}-1}$. Hence, it holds that
		\begin{align*}
			f_{i^{*},\min}=\mu-3\sigma \quad \text{and}\quad f_{i^{*},\max}=\mu+3\sigma+6\beta\sigma\left(1-\frac{i^{*}}{2^{n}-1}\right).
		\end{align*}
	\end{exam}
	
	\subsection{Operator $\mathcal{A}$ for Range Value-at-Risk}\label{sec:quantum_algorithm_rvar}
	
	The basic idea for calculating the RVaR is to combine two comparator circuits and use them to control the application of an appropriate $y$-rotation similiar to the one in ~\prettyref{fig:multi_controlled_y_rotation}. This is then combined with the calculation of VaR-values as shown in Algorithm 2. 
	
	To define the y-rotation we proceed analogously to~\prettyref{sec:quantum_algorithm_expectiles}.
	Given the affine mapping $g:=g_X$ with $g_{\min} := \min\limits_{i \in S} g(i)$ and $g_{\max} := \max\limits_{i \in S} g(i)$, we define 
	\begin{align*}
		\tilde{g}(i) := 2 \frac{g(i)-g_{\min}}{g_{\max}-g_{\min}} - 1.
	\end{align*}
	
	For the scaling parameter $\gamma \in [0,1]$ we set
	\begin{align*}
		\hat{g}(i) := \gamma \tilde{g}(i) + \frac{\pi}{4}.
	\end{align*}
	The $y$-rotations are then given by $R_y(2\hat{g}(i))$. 
	
	To explain the simultaneous application of two different comparators in more detail, we extend the notations from~\prettyref{sec:quantum_algorithm_expectiles}. For $k \in \mathbb{Z}$, let $\mathrm{cmp}_{1}(k)$ and $\mathrm{cmp}_{2}(k)$ be the unitary operators defined by the following rule: 
	\begin{align*}
		&\mathrm{cmp}_{1}(k) : \ket{i}_{n}\ket{0}_{\ } \longmapsto 
		\begin{cases} 
			\ket{i}_{n}\ket{1} &,i \geq k, \\ 
			\ket{i}_{n}\ket{0} &,i < k,
		\end{cases}\\
		&\mathrm{cmp}_{2}(k) : \ket{i}_{n}\ket{0}_{\ } \longmapsto 
		\begin{cases} 
			\ket{i}_{n}\ket{1} &,i < k, \\ 
			\ket{i}_{n}\ket{0} &,i \geq k. 
		\end{cases}	
	\end{align*}
	
	These comparator circuits use the same $n$-qubit register $\ket{i}_{n}$ to store the binary representation of $i$. Furthermore, comparator circuit $j$ uses an  $(n-1)$-qubit register $\ket{a_{j}}$ for ancillas and one qubit $\ket{t_j}$ to save the result of the comparison, see~\prettyref{fig:cmp}.
	\begin{figure}[!ht]
		\[
		\begin{array}{c}
			\Qcircuit @C=1.5em @R=0.75em {
				&\lstick{\ket{a_1}} & \multigate{2}{\mathrm{cmp_1}(k_1)} & \qw  & \qw & \qw\\
				&\lstick{\ket{t_1}_{\,}} &\ghost{\mathrm{cmp_1}(k_1)}  & \qw & \ctrl{4} & \qw\\
				&\lstick{\ket{i}_{n\,}}  &\ghost{\mathrm{cmp_1}(k_1)} & \multigate{2}{\mathrm{cmp_2}(k_2)}& \qw& \qw\\
				&\lstick{\ket{t_2}_{\,}} & \qw & \ghost{\mathrm{cmp_2}(k_2)} & \ctrl{2} & \qw\\
				&\lstick{\ket{a_2}} & \qw & \ghost{\mathrm{cmp_2}(k_2)} & \qw& \qw \\
				&\lstick{\ket{0}_{\ }} & \qw & \qw & \gate{R_y(2\hat{g}(i))} & \qw\\
			}
		\end{array}
		\]
		\caption{Quantum circuit for calculating $\rangeValueAtRisk{\alpha,\beta}{X}$. The distribution of $X$ is loaded into $\ket{i}_{n}$. The ancillas for the comparator circuits are stored in $\ket{a_1}$ and $\ket{a_2}$. Comparator results are placed in $\ket{t_1}$ and $\ket{t_2}$, which are used to control the $y$-rotation $R_y(2\hat{g}(i))$.}
		\label{fig:cmp}
	\end{figure}
	
	We combine the two comparator circuits in the sense that they constrain $i$ from both sides and the two result qubits control the $y$-rotation. Finally, the operator $\mathcal{A}$ is created by loading the distribution into the qubit register $\ket{i}_{n}$. Note that at this point $\mathcal{A}$ still depends on the choice of $k_1$ and $k_2$.
	
	Suppose $k_2 < k_1$. If we neglect the ancillas and apply $\mathcal{A}$ to the initial state $\ket{0}_{n}\ket{0}\ket{0}\ket{0}$, where the second and third qubit represent $t_1$ and $t_2$, we obtain
	\begin{align*}
		\mathcal{A}\ket{0}_{n}\ket{0}\ket{0}\ket{0} &= \sum_{i=0}^{k_2-1} \sqrt{p_i}\ket{i}_{n}\ket{1}\ket{0}\ket{0}\\
		&\quad+ \sum_{i=k_2}^{k_1} \sqrt{p_i}\ket{i}_{n}\ket{1}\ket{1} \Big( \cos(\hat{g}(i)) \ket{0} + \sin(\hat{g}(i))\ket{1} \Big)\\
		&\quad+ \sum_{i=k_1+1}^{2^n-1} \sqrt{p_i}\ket{i}_{n}\ket{0}\ket{1}\ket{0}.
	\end{align*}
	
	The probability that this state is $\ket{1}$ at the position of the last qubit is $\sum_{i=k_2}^{k_1}p_i \sin^2(\hat{g}(i))$.
	
	If we apply the approximation $\sin^2(\hat{g}(i)) \approx \gamma \tilde{g}(i) + \frac{\pi}{4}$ as in~\prettyref{sec:quantum_algorithm_expectiles} and choose for $k_1$ and $k_2$ integer approximations for $\valueAtRisk{\alpha}{g^{-1} \circ X}$ and $ \valueAtRisk{\beta}{g^{-1} \circ X}$, then we obtain
	\begin{align*}
		\sum_{i=k_2}^{k_1}p_i \sin^2(\hat{g}(i)) \approx  \frac{-2\gamma P(k_2 \leq X \leq k_1)\rangeValueAtRisk{\alpha,\beta}{g^{-1}\circ X} - g_{\min}}{g_{\max}-g_{\min}} - \gamma + \frac{1}{2}.
	\end{align*}
	
	From this expression we can reconstruct $\rangeValueAtRisk{\alpha,\beta}{X}$. Note that integer approximations for 
	$\valueAtRisk{\alpha}{g^{-1} \circ X}$ and $ \valueAtRisk{\beta}{g^{-1} \circ X}$ and an estimate for $P(k_2 \leq X \leq k_1)$
	can be obtained with the quantum methods in~\textcite{woerner_quantum_2019}.

	\section{Two case studies}\label{sec:numerical_study}
	
	In this section, we analyze the computation of VaR, CVaR, expectiles and RVaR on noiseless and noisy simulators.
	The noise model for the noisy simulator mimics the behavior of the quantum computer IBM Brisbane.
	All computations are performed with the help of the qiskit framework, see~\parencite{treinish_qiskitqiskit_2022}. 
	
	\subsection{Case study 1: Credit risk}
	
	In this section, we demonstrate the evaluation of risk measures using an example 
	from the financial industry and compare the performance between the algorithms for RVAR, expectiles and VaR. We consider the calculation of economic capital requirements as carried out by banks 
	in the context of credit risk assessment, cf.~\textcite{rutkowski}. We also discuss the effects of scaling the number of qubits to a size that is realistic in practice, see~\prettyref{rem:scalingRealWorldProblem} below.

	By legal requirements, banks have to assess the credit default risk of their 
	portfolios and have to provide sufficient capital reserves in order to cover potential losses, cf.~\textcite{basel1,basel2,basel3}.
	In the following, we briefly present a widely used model for assessing credit risk, cf.~\textcite{rutkowski} and ~\textcite{egger_credit_2021}:

	Assume that the bank's portfolio consists of $K$ assets and let $k \in \{1,...,K\}$. The loss of the $k$-th asset is modeled by the random variable $L_k = \lambda_kX_k$,
	where $X_k$ is a Bernoulli random variable and $\lambda_k$ is a positive real number 
	that quantifies the amount of loss in the event of default. 
	The total loss of the bank is defined as the random variable $L=L_1+...+L_K$.

	In practice, credit defaults on different assets are correlated, for example because they are affected by the same economic event.
	The correlation is modeled by making the default probabilities 
	of the $X_k$ dependent on the realization of a common, latent random variable $Z$.
	More precisely, the default probabilities of the Bernoulli random variables $X_k$ are given by
	\begin{align*}
		p_k(z) = F\left(\frac{F^{-1}(p_k^0)-\sqrt{\rho_k}z}{\sqrt{1-\rho_k}}\right),
	\end{align*}
	where $z \in \mathbb{R}$ denotes a realization of a standard normal distributed random variable $Z$, $p_k^0 \in [0,1]$,
	$F$ being the cumulative distribution function of $Z$ and $\rho_k\in[0,1)$ is a sensitivity parameter. 
	Now, the capital reserve is calculated by evaluating risk measures of the total loss $L$.

	A quantum circuit for loading the distribution of $L$ is presented in \textcite{egger_credit_2021}.
	In the upcoming example of a credit portfolio, the calculation of the capital 
	reserve using VaR and expectile gives us an almost identical reserve, while the result using RVaR is significantly different. By varying the parameter $m$ in the QAE (recall Section~\ref{sec:qae}), we compare the accuracy of the algorithms in~\prettyref{fig:credit_risk_example}. 

	We assume that the credit portfolio consists of two assets with model parameters given in~\prettyref{tab:parameters_example}.

	\begin{table}[!ht]
		\centering
		\begin{tabular}{l|c|c|c|c}
			$k$ & $p_k^0$ & $\rho_k$ &  $\lambda_k$\\
			\hline
			1 & 0.12 & 0.1 & 1  \\
			2 & 0.35 & 0.05 & 2 
		\end{tabular}
		\caption{Parameters of the credit risk portfolio.}
		\label{tab:parameters_example}
	\end{table}

	The standard normal distribution is stored in a register with $2$ qubits. 
	The density of the normal distribution is restricted to the interval $[-2,2]$.
	The loading of the distribution of $L$ can be divided into two steps, see \prettyref{fig:circuit_loss}. 
	The first step consists of constructing the operator $U$, the second of constructing the operator $S$.

	\begin{figure}[!ht]
		\[
		\begin{array}{c}
			\Qcircuit @C=1.5em @R=0.75em {
				\lstick{\ket{Z_1}} & \multigate{3}{U} & \qw & \meter \\
				\lstick{\ket{Z_2}} & \ghost{U} & \qw & \meter\\
				\lstick{\ket{X_1}} & \ghost{U} & \multigate{3}{S} & \meter \\
				\lstick{\ket{X_2}} & \ghost{U} & \ghost{S}& \meter \\
				\lstick{\ket{L_1}} & \qw & \ghost{S}& \qw \\
				\lstick{\ket{L_2}} & \qw & \ghost{S}& \qw 
			}
		\end{array}
		\]
		\caption{Quantum circuit for loading the distribution of $L$. Qubits $Z_1$ and $Z_2$ represent the discretized normal distributed random variable $Z$, qubits $X_1$ and $X_2$
		the Bernoulli random variables of each asset and qubits $L_1$ and $L_2$
		the losses of each asset.}
		\label{fig:circuit_loss}
	\end{figure}

	In the first step, the distribution of the multivariate random variable 
	$(X_1,X_2)$ is loaded into a quantum register. This is done with the circuit in \prettyref{fig:circuit_U}.

	\begin{figure}[!ht]
		\[
		\begin{array}{c}
			\Qcircuit @C=1.5em @R=0.75em {
				\lstick{\ket{Z_1}} & \multigate{1}{U_Z} & \qw & \ctrl{2} & \qw & \ctrl{3} & \qw \\
				\lstick{\ket{Z_2}} & \ghost{U_Z} & \ctrl{1} & \qw & \ctrl{2} & \qw & \qw \\
				\lstick{\ket{X_1}} & \gate{R_y(b_1)} & \gate{R_y(a_1)} & \gate{R_y(2a_1)} & \qw & \qw & \qw \\
				\lstick{\ket{X_2}} & \gate{R_y(b_2)} & \qw & \qw & \gate{R_y(a_2)} & \gate{R_y(2a_2)} & \qw 
			}
		\end{array}
		\]
		\caption{Quantum circuit for the operator $U$. Here, $U_Z$ is the
		circuit loading the normal distribution on $2$ qubits. The angles $a_k, b_k$ arise from 
		a first-order-approximation of $2 \arcsin(\sqrt{p_k(z)})$.}
		\label{fig:circuit_U}
	\end{figure}

	The values $a_k, b_k$ are obtained by a first-order-approximation of the rotation angles 
	\begin{align*}
		\Theta_p^k(z) = 2 \arcsin(\sqrt{p_k(z)}).
	\end{align*}
	The values of $z$ are given as $z_i = a_z i + b_z$, where $i \in \{0,1,2,3\}$ and $a_z = 4/3,\ b_z = -2$. Here, $i \mapsto \tfrac{4}{3}i-2$ is the affine 
	transformation mapping $[0,3]$ onto $[-2,2]$.
	By the formula for $p_k(z)$, we have for the first derivative
	\begin{align*}
		p^{\prime}_k(z) = - \sqrt{\frac{\rho_k}{1-\rho_k}} F^{\prime}\left( \frac{F^{-1}(p_k^0)-\sqrt{\rho_k}z}{\sqrt{1-\rho_k}}\right).
	\end{align*}
	Since $F^{\prime}$ is the density of the normal distribution, we obtain for the derivative of $\Theta_p^k$ that 
	\begin{align*}
		(\Theta_p^k)^{\prime}(z) = \frac{p^{\prime}_k(z)}{\sqrt{p_k(z)}} \frac{1}{\sqrt{1-p_k(z)}}.
	\end{align*}
	The first-order-approximation of $\Theta_p^k$ around $0$ is $\Theta_p^k(z) \approx \Theta_p^k(0) + (\Theta_p^k)^{\prime}(0) z$. For $z_i = a_z i + b_z$ we get $\Theta_p^k(z_i) \approx b_k + a_k i$ with $b_k:=\Theta_p^k(0)+b_z$ and $a_k = (\Theta_p^k)^{\prime}(0)a_z$.
	These quantities are summarized in \prettyref{tab:parameters_creditRisk}.
	\begin{table}[!ht]
		\centering 
		\begin{tabular}{l|c|c|c|c}
			$k$ & $\Theta_p^k(0)$ & $(\Theta_p^k)^{\prime}(0)$  & $a_k$ & $b_k$\\
			\hline
			1 & 0.7592 & -0.2127 & -0.2836 & -1.2408 \\
			2 & 1.0344 & -0.1676 & -0.2234 & -0.9656
		\end{tabular}
		\caption{Parameters for loading the credit risk portfolio onto a quantum computer.}
		\label{tab:parameters_creditRisk}
	\end{table}
	
	The operator $S$ calculates the sum of the random variables $L_k$. It operates on the $X$- and $L$-register 
	as follows:
	\begin{align*}
		S: \ket{X_1}\ket{X_2}\ket{0}\ket{0} \longmapsto \ket{X_1}\ket{X_2}\ket{\lambda_1X_1+\lambda_2X_2}.
	\end{align*}
	Its exact implementation can be found in detail in~\textcite{stamatopoulos_option_2020}.
	\prettyref{fig:dist_loss} shows the distribution of $L$ that we obtain in our example.

	\begin{figure}[!ht]
		\centering
		\includegraphics[scale=0.55]{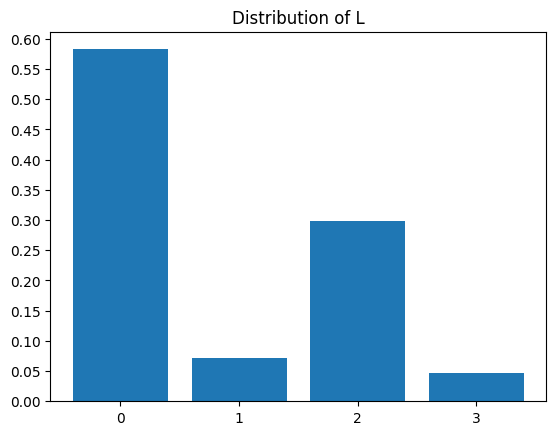}
		\caption{Distribution of the random variable $L$ given the parameters in 
		\prettyref{tab:parameters_example}.} 
		\label{fig:dist_loss}
	\end{figure}

	\prettyref{fig:credit_risk_example} shows the average result obtained 
	by calculating risk measures of $L$ when QAE is executed multiple times.
	It shows that 
	the average result converges to the exact result if the number $m$ of ancilla qubits increases. 
	In addition, the figure shows that the accuracy is similar in the case of expectile and VaR, while the accuracy in the case of RVaR is less accurate.

	\begin{figure}[!ht]
		\centering
		\includegraphics[scale=0.45]{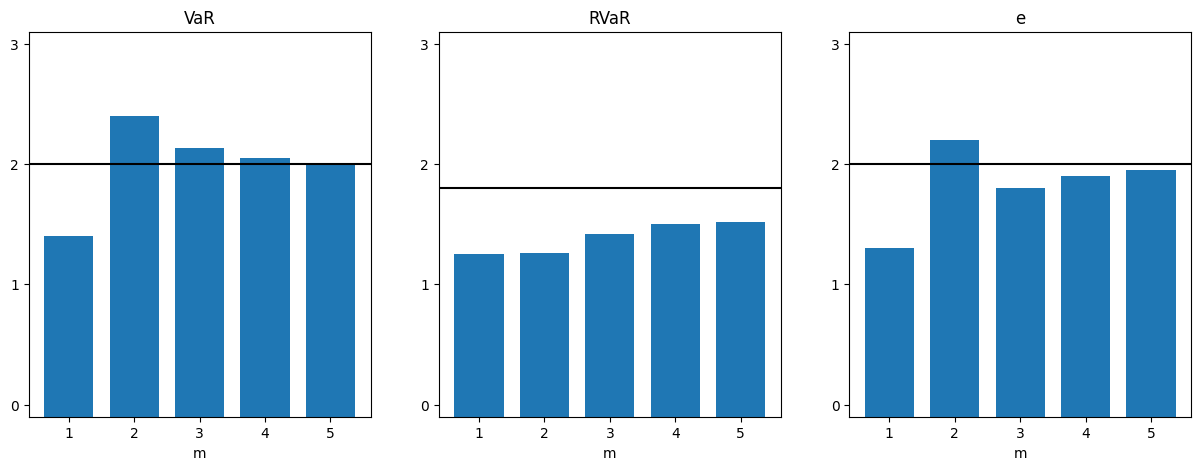}
		\caption{The average result of calculating the risk measure of $L$ when QAE is executed $1000$ times. 
		The parameter $m$ indicates the number of ancilla qubits used in the QAE. 
		The black horizontal line is the exact value of the corresponding risk measure.}
		\label{fig:credit_risk_example}
	\end{figure}

	\begin{rem}\label{rem:scalingRealWorldProblem}
		In case of the VaR, \textcite[Section V]{egger_credit_2021} state an argumentation how the aforementioned problem scales to a size that is relevant for the financial industry. They analyzed
		the circuit depth (number of Toffoli-gates) in dependence of the number of assets $K$ (or more precisly the number of qubits $n_S$ used to represent the total loss). The number of Toffoli-gates as a measure for the complexity is, e.g.,~also used in~\textcite{chakrabarti_threshold_2021} . \textcite[Table II]{egger_credit_2021} states the number of Toffoli-gates for the operators $U$ and $S$, which do not depend on the VaR-algorithm itself. Hence, we obtain the same numbers in case of expectiles or RVaR. The choice of a risk measure is part of operator $\mathcal{C}$ in~\textcite{egger_credit_2021}, for which~\textcite[Table II]{egger_credit_2021} gives us the value $2\lfloor\log_2(n_S-1)\rfloor+9$. 
		This value stems from~\textcite{draper2006} and it is the number of Tofolli-gates for a fixed value comparator. Since, the operator in case of expectiles is also based on a fixed value comparator,	this number can be adopted to the expectile case. Also the calculation for RVaR is based on two fixed comparator circuits, compare with~\prettyref{fig:cmp}. Hence, this value can be scaled linearly and therefore the value $2\lfloor\log_2(n_S-1)\rfloor+9$ acts also as an orientation in the RVaR-case.
	\end{rem}

	\subsection{Case study 2: Insurance claim size data}
	
	In the second case study, we calculate the risk measures based on real world data fitted to a lognormal and a gamma distribution, because of their popularity in different disciplines. For instance, the lognormal distribution is applied in the description of growth processes~\parencite{sutton_gibrats_1997, huxley_problems_1993} as well as for modeling stock prices in the Black-Scholes model~\parencite{black_pricing_1973}. The gamma distribution is widely used in actuarial science to model the claim size distribution for non-life insurance contracts~\parencite{boland_statistical_2007, ohlsson_non-life_2010, laudage_severity_2019}. It is also applied for failure-time analysis~\parencite{scheiner_design_2001}. To do so, we denote by $LN(\mu,\sigma)$ a lognormal distribution with expectation $\exp(\mu+\frac{\sigma^2}{2})$ and by $\Gamma(p,q)$ a gamma distribution with mean $p/q$.
	
	To obtain realistic parameter values, 
	we calibrate the parameters towards a real world data set, namely the Norwegian automobile data set from the 
	\texttt{R}-package \texttt{CASdatasets}. This data set is, e.g.,~analyzed in~\textcite{fung_soft_2024}. 
	As done in there, we delete the values 1, 99 and 16999 from the data set. Furthermore, 
	insurance data often admits heavy tails. 
	Since this is not the main purpose in our case, we only focus on values below 100000. 
	This gives us 7704 data points in total. 
	We estimate the parameters of the lognormal and gamma distributions via the method of moments. 
	For the formulas in case of the lognormal distribution, see e.g.,~\textcite{ginos_parameter_2009}. 
	Additionally, we plot in Figure~\ref{fig:hist_norauto} the density functions of the calibrated lognormal and gamma distribution, 
	as well as, the histogram of the data.
		
	\begin{table}[!ht]
		\centering
		\begin{tabular}{l|c|c}
			Distribution of $-Y$ & Parameters & Interval \\
			\hline
			Lognormal $LN(\mu,\sigma^2)$ & $\mu = 9.6754$, $\sigma =  0.7416$ & $(0,100000]$ \\
			Gamma $\Gamma(p,q)$& $p = 1.3635$, $\frac{1}{q} = 15373$ & $(0,100000]$ \\
		\end{tabular}
		\caption{Parameters of the applied distributions and the intervals for their discretization.}
		\label{tab:parameters_norauto}
	\end{table}

	\begin{figure}[!ht]
		\centering
		\includegraphics[scale=0.45]{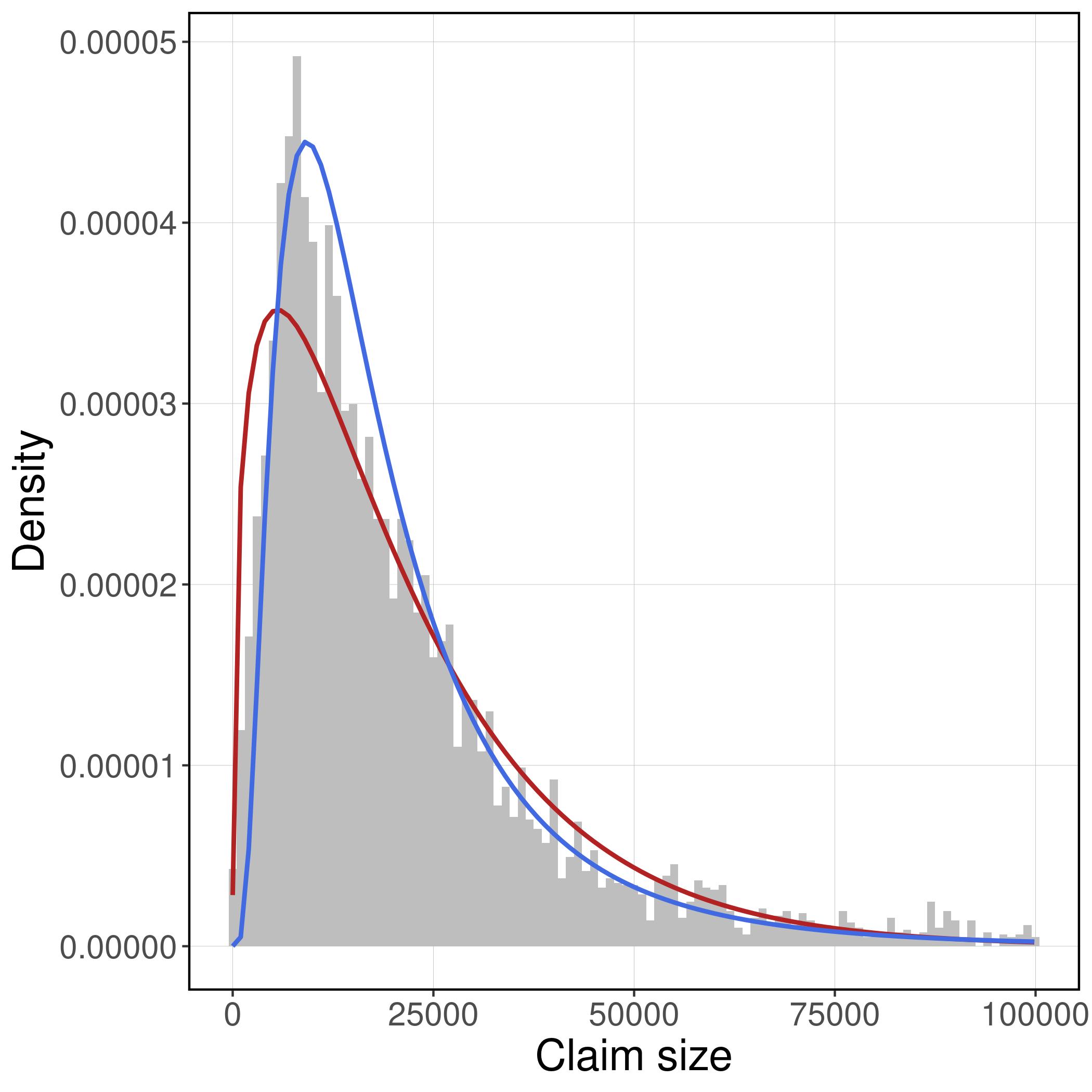}
		\caption{Histogram of the Norwegian automobile data set and the calibrated lognormal (blue curve) and gamma (red curve) distribution.}
		\label{fig:hist_norauto}
	\end{figure}

	Our analysis follows the idea behind~\prettyref{exam:normal}. As in there, let $Y$ denote a continuously distributed random variable with density function $f$.
	As a first step, we would like to express the values of $f$ as norm-squared amplitudes of a suitable quantum state.
	Given that we have $n$ qubits available, we need to restrict our attention to a bounded interval included in the domain of $f$ and discretize this interval by $2^n$ points. 
	Formally, we move from the continuous random variable $Y$ to a discretized version of it, denoted by $X$. 
	That means also that we replace $f$ with the density function of $X$. 
	The quantum state will then capture the probabilities of $g_X^{-1} \circ X$, 
	where $g_X$ is the affine mapping that transforms $\{0,1,\ldots,N-1\}$ to the support of $X$. 
	\prettyref{tab:parameters_norauto} summarizes the parameters for the  distributions of $-Y$ (lognormal, gamma) 
	and the interval to which the associated density function is restricted. 
	
	There are two effects that have a major impact on the quality of the calculation of risk measures on a quantum computer. One is the coherence time and gate fidelity, i.e.,~the computational accuracy of the quantum computer. The other is the accuracy of the approximation of the continuous density function $f$ by the probability mass function of~$g_X^{-1} \circ X$. 
	
	The latter depends on the number of qubits for loading the distribution, because it determines the granularity of the discretization of the domain of $f$. \prettyref{fig:dist_gamma} shows the influence of hardware noise and qubit count on the distribution.
	\begin{figure}[!ht]
		\centering
		\includegraphics[scale=0.210]{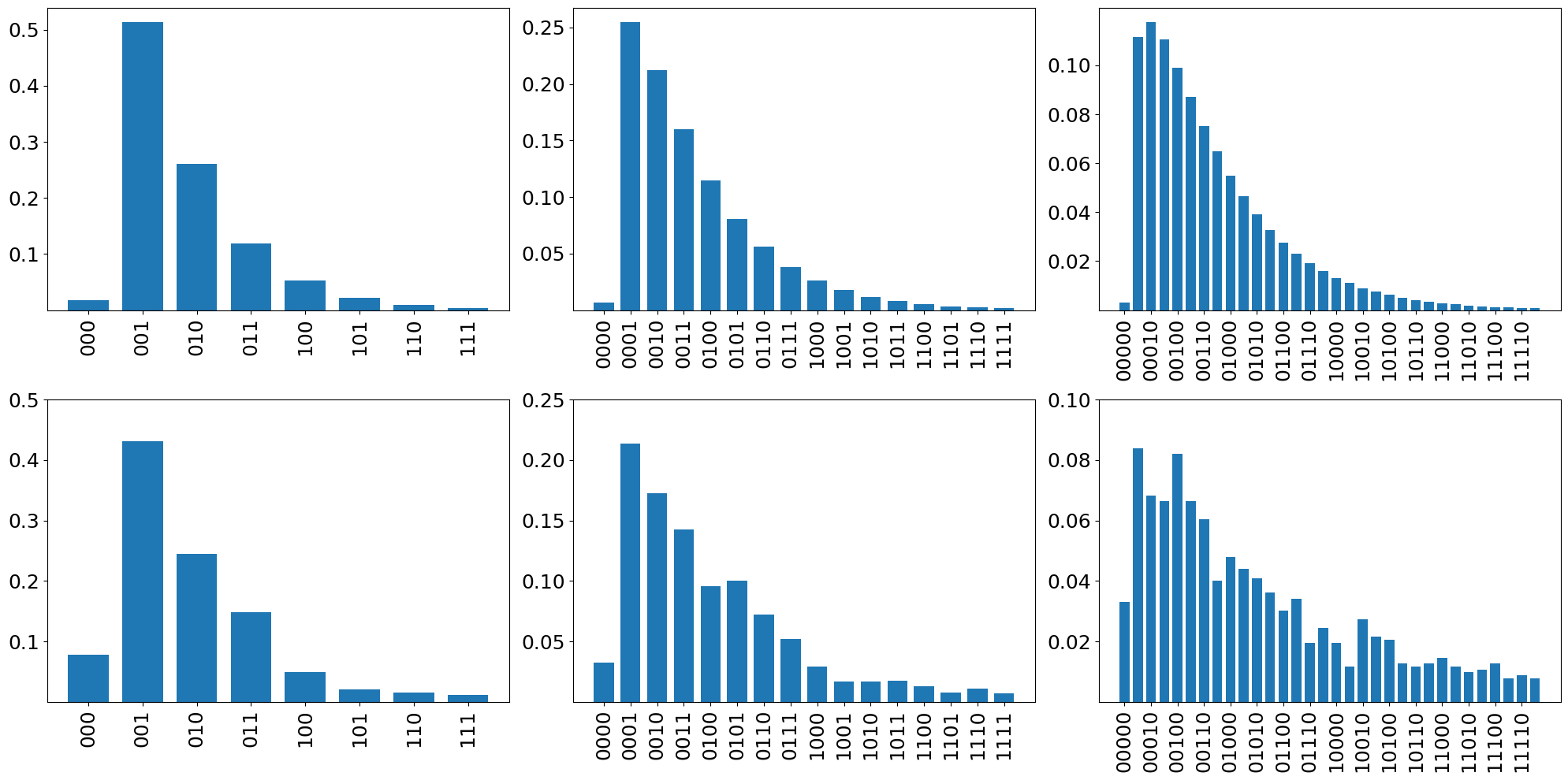}
		\caption{Distribution loading for the gamma distribution 
		$\Gamma(1.3635,15373^{-1})$ with respect to different numbers of qubits. 
		The blue bars describe the quantum state generated by the distribution circuit, 
		or equivalently, the probability mass function of $g_X^{-1} \circ X$. 
		The first row shows the results of a noiseless, the second row the results of a noisy simulator. 
		From the left to the right column, the number of qubits is 3, 4 and 5.}
		\label{fig:dist_gamma}
	\end{figure}

	We see that hardware noise leads to inaccuracies in the description of the density function. This is typical for noisy intermediate-scale quantum devices.
	
	The granularity of the discretization is of particular interest for the calculation of risk measures, because it can lead to noticeable changes in the capital requirements. To illustrate this statement, we compare the results between the true value of a risk measure and its value calculated on a quantum simulator in~\prettyref{fig:sim_result}.
	
	Here, we use for VaR and CVaR a level of $\lambda = 0.05$. 
	Further, for the expectile we use the parameter value $\alpha = 0.05$  and for the RVaR the parameters 
	$\alpha = 0.005$ and $\beta = 0.05$. For the definitions of these risk measures and the corresponding meaning behind $\alpha$, $\beta$ and $\lambda$, the reader should recall~\prettyref{sec:definition_risk_measures}.
	\prettyref{fig:sim_result} states that good results are produced 
	if the number of qubits 
	is sufficiently large.
	For all risk measures, the approximation error (relatively to the length of the interval to discretize the domain of the density 
	function) becomes sufficiently small, i.e.,~smaller than $0.025$ on the y-axis in~\prettyref{fig:sim_result}, 
	if the number of qubits for loading the distribution increases.
	
	It is worth noting that the parameter $\gamma$, which appears in the construction of the operator $\mathcal{A}$, has a significant impact on the accuracy of the results. In general, a suitable choice for $\gamma$ depends on the distribution and the risk measure under consideration. As a rule of thumb for our examples, we recommend a value of $\gamma \approx \pi/8$ for VaR and CVaR, and a value of $\gamma \approx \pi/4$ for RVaR and expectiles. These values arise from the heuristical experience of the authors during creating the presented examples.
	
	\begin{figure}[!ht]
		\centering
		\includegraphics[scale=0.4]{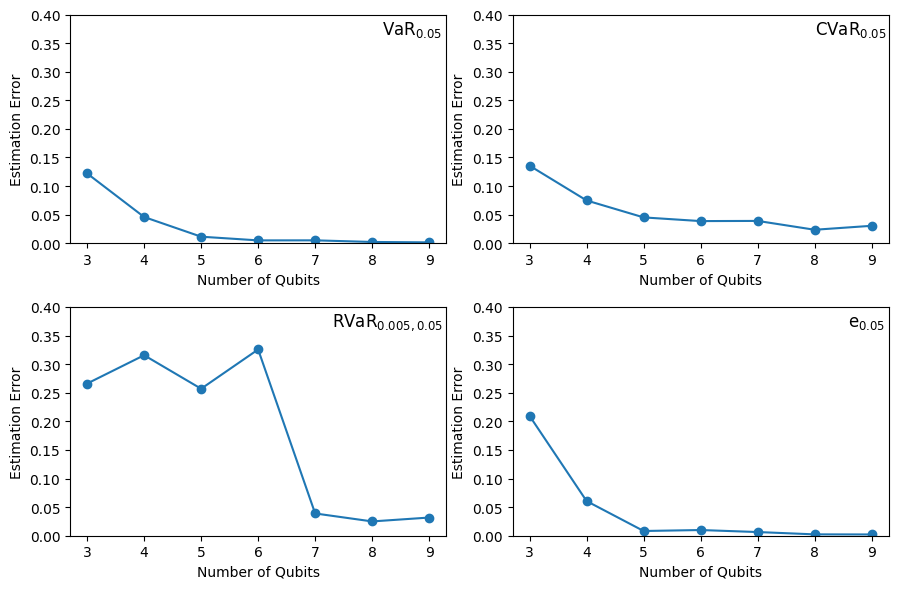}
		\caption{The estimation error on a simulator as a function of the number of 
		qubits used for loading the distribution. 
		The error is given relatively to the length of the domain on which $f$ is defined. 
		For the results of this plot we applied the IQAE with target precision $0.05$ and confidence level 
		$0.01$ to $\Gamma(1.3635,15373^{-1})$. We used $\gamma = \pi/8$ for VaR and CVaR and $\gamma = \pi/4$ for RVaR and expectiles. An analogous error behavior is given for the lognormal distribution, see Appendix~\ref{sec:figures}.}
		\label{fig:sim_result}
	\end{figure}
	
	Besides the distribution loading, the number of ancilla qubits is also responsible for the approximation accuracy in the canonical QAE, because it determines the amount of classical bits used to approximate the result. In the case of an expectile, \prettyref{fig:canonical_gamma} shows how an increase in the number of ancilla qubits (parameter $m$ in~\prettyref{fig:amplitude_estimation_circuit}) allows for an improvement in the accuracy of the results. This plot is motivated by Figure~4 in~\textcite{woerner_quantum_2019}.
	\begin{figure}[!ht]
		\centering
		\includegraphics[scale=0.45]{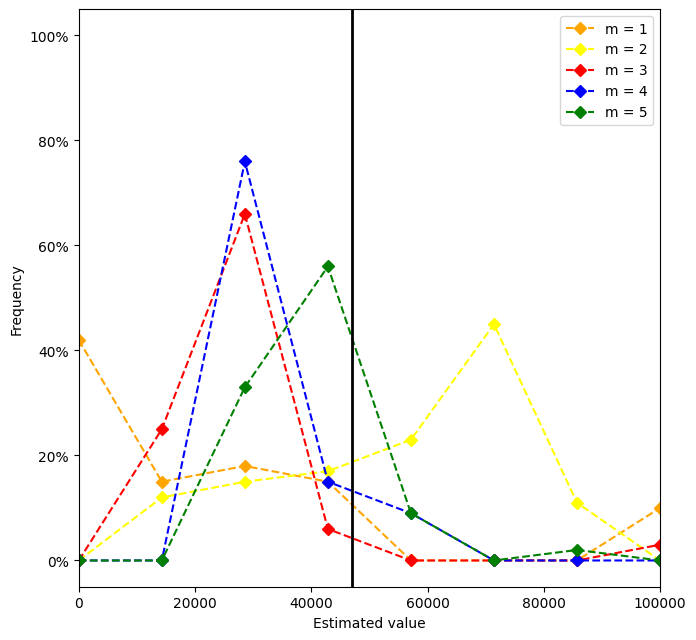}
		\caption{Estimation for the expectile calculated with the canonical QAE on a simulator. 
		We used $\gamma = \pi/4$. The underlying distribution is $\Gamma(1.3635,15373^{-1})$. 
		The exact value is indicated by the black line. With an increased number of ancilla qubits $m$, the most frequent result approaches the exact value. An analogous behavior is given for the lognormal distribution, see Appendix~\ref{sec:figures}.}
		\label{fig:canonical_gamma}
	\end{figure}

	\prettyref{fig:qpu_results} illustrates the performance of our algorithms as it would be expected on a real quantum device. 
	Since calculations with many qubits on real hardware suffer strongly from noise, we restrict ourselves 
	to three qubits when loading the distribution on a noisy simulator. The latter emulates a real device.
	Accordingly, we choose levels $\lambda = \alpha = 0.20$ for VaR, CVaR and EVaR and $\alpha = 0.05$, $\beta = 0.2$ 
	for RVaR.
	Also, we restrict ourselves to $m = 3$ qubits for the canonical QAE. 
	The algorithms are applied to different numbers of shots, which allows us to test potential improvements in accuracy for IQAE and MLQAE. 
	In general, the accuracy of the canonical QAE does not change with an increased number of shots, 
	because the canonical QAE leads to the most frequent result, compare with the third column in~\prettyref{fig:qpu_results}.
	\begin{figure}[!ht]
		\centering
		\includegraphics[scale=0.5]{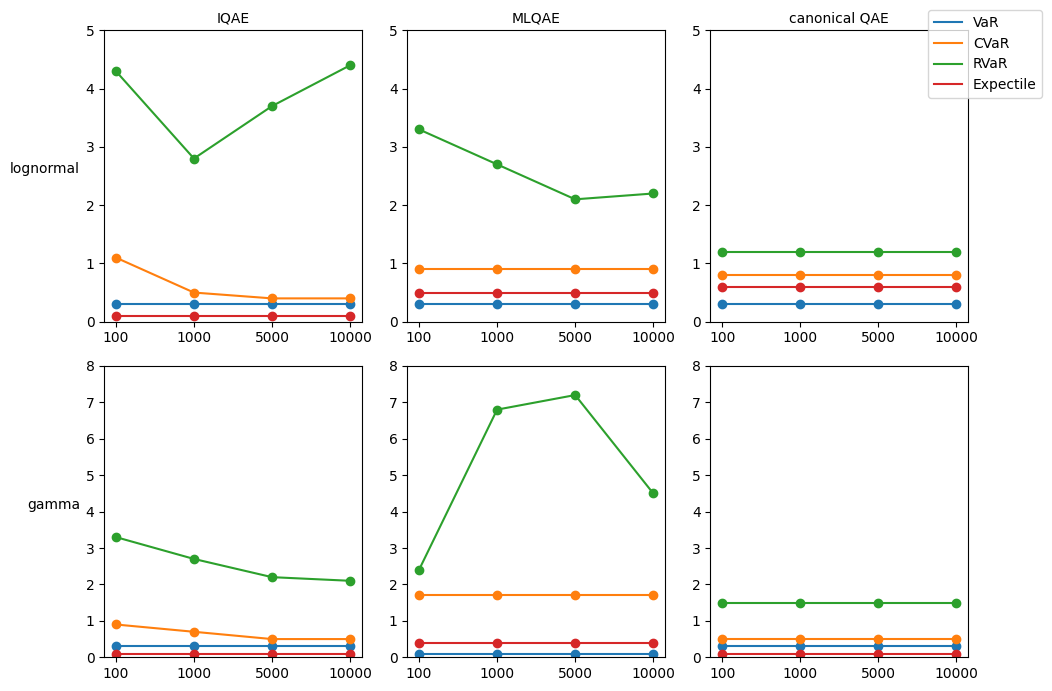}
		\caption{The estimation error on a noisy simulator as a function of the number of shots. 
		The error is given relatively to the length of the interval for discretizing the domain of $f$, 
		i.e.,~the estimation error is given as outcome of the amplitude estimation minus 
		the true value and this difference divided by the length of the mentioned interval.
		IQAE was applied with a confidence level of $0.05$, MLQAE with a evaluation schedule of $3$ 
		and canonical QAE with $m = 3$.
		Unlike RVaR, the expectile calculations, with the chosen algorithms and with $3$ qubits for the 
		distribution, are sufficiently reliable in the presence of noise.}
		\label{fig:qpu_results}
	\end{figure}
	
	The minimum estimation error in \prettyref{fig:qpu_results} in the case of the gamma distribution and  
	expectiles is $0.125$ for IQAE and canonical QAE. In contrast, 
	for MLQAE we obtain a significantly larger minimum estimation error of $0.375$. 
	In addition, note that the accuracy for VaR and expectiles is substantially better 
	than for CVaR and RVaR. There are two main reasons leading to these observations:
	
	First, VaR and expectiles are not direct results of the outcome of the amplitude estimation.
	The latter is only used to select the next subinterval in the bisection search algorithm. 
	Since the selection of the next subinterval is based on a comparison (with respect to the $\leq$-relation) 
	between the amplitude estimation result and a constant, 
	the correct subinterval is selected even if there are small errors in the amplitude estimation. The only case in which this fails is the one in which the QAE gives a wrong sign for the used objective function $h_{X,\alpha}$ in Algorithm 1. But, since it is more probable that this occurs for later iterations of the algorithm, in which the interval length is already small, the errors are also small. Hence, the main influence on the estimation error (if the tolerance values $\epsilon,\delta$ in Algorithm 1 are small enough) stems from the discretization of the domain of $f$, i.e.,~in the majority of the cases the estimated value is the closest possible to the true value regarding the chosen discretization. In this sense, the algorithms for VaR and expectiles are often close to the exact results, which means that the influence of the noise of the quantum hardware is reduced by this technique.
    
    Second, regarding the results of the CVaR and the RVaR, the QAE is not only used to compute the next subinterval, 
	but it is also applied in a second round to compute a conditional expectation. 
	The latter means that the probability $p$ in Algorithm~2 is also computed via QAE and enters the calculation
	of the conditional expectation. 
	In other words, the QAE-result for $p$
	occurs in the denominator to calculate the conditional expectation and by the fact that $p$ is usually close to zero, the results of RVaR and CVaR become very sensitive to inaccuracies in the calculation of $p$. This explains the large deviations in~\prettyref{fig:qpu_results}.
	
	Finally, we note that IQAE yields slightly better results than MLQAE when considering VaR and expectile, whereas MLQAE appears more robust than IQAE when applied to obtain the RVaR of the lognormal distribution.\\ 
	
	Summarizing, the results for CVaR and RVaR are affected by the noise on the quantum hardware. Instead, in the case of calculating VaR and expectiles, this noise is compensated by the bisection search algorithm and the estimated values are close to the exact values. 

	\section{Conclusion}\label{sec:conclusion}
	
	We present two quantum-based algorithms to calculate expectiles and Range Value-at-Risk. These algorithms are based on quantum amplitude estimation. A bisection search algorithm is performed on a classical computer to calculate expectiles. The objective function is evaluated by quantum amplitude estimation on a quantum device. This is in line with the idea to calculate the Value-at-Risk as described in~\textcite{woerner_quantum_2019}. On the other hand, the Range Value-at-Risk is a direct outcome of the quantum amplitude estimation. The methodology is inspired by the calculation of the Conditional Value-at-Risk in~\textcite{woerner_quantum_2019}.
	
	By two case studies, we find that the algorithms converge sufficiently fast on a simulator towards the true values, if the number of qubits to load the distribution is increased. The lowest estimation errors on a noisy simulator are obtained for the expectile by the IQAE method. The calculation of expectiles turns out to be robust against noises stemming from the noisy simulator. On the other hand, the Range Value-at-Risk is significantly affected by this noise. 
	
	This paper is a next step to calculate risk measures via quantum computing. Risk measures that are not considered in this manuscript can be of interest for future research. For example, expectiles are special cases of so-called shortfall risk measures, see e.g.,~\textcite{follmer_stochastic_2016}. Also other utility-based risk measures could be analyzed, like e.g.,~the optimized certainty equivalent~\parencite{ben-tal_expected_1986,ben-tal_old-new_2007} or the optimal expected utility risk measure~\parencite{vinel_certainty_2017,geissel_optimal_2018}.

	\printbibliography

	\begin{appendix}\label{sec:appendix}
		
		\section{Risk measures: Mathematical background}\label{sec:mathematicalBackground}
		
		We recall some mathematical background of risk measures. In particular, we state the definition of a general risk measure and give further properties of expectiles that are important for the applicability of the bisection search algorithm in~\prettyref{sec:bisectionExpectiles}. 
		
		To do so, we assume an atomless probability space $\left(\sampleSpace,\sigmaField,\symbolProbabilityMeasure\right)$. The linear space of equivalence classes of random variables on it is denoted by $\Lpspace{0}{\sampleSpace,\sigmaField,\symbolProbabilityMeasure}$, or $\symbolLpspace{0}$ for short. We always equip such a space with the $\symbolProbabilityMeasure$-almost sure order. For $p\in(0,\infty)$, the linear space of equivalence classes of $p$-integrable random variables is denoted by $\Lpspace{p}{\sampleSpace,\sigmaField,\symbolProbabilityMeasure}$, or $\symbolLpspace{p}$ for short. Further, we denote the linear space of all essentially bounded random variables by $\Linftyspace{\sampleSpace,\sigmaField,\symbolProbabilityMeasure}$, or $\symbolLinftyspace$ for short. From now on, we assume a linear subspace $\financialPositions$ of $\symbolLpspace{0}$.\\
		
		Now, we state the definition of (monetary) risk measures. Our definition is consistent with~\textcite[Definition 2.1]{cheridito_risk_2009} and~\textcite[Definition 1.1]{geissel_optimal_2018}.
		
		\begin{defi}[Monetary risk measures]\label{defi:monetary_risk_measures}
			A function $\rho:\financialPositions \to \left(-\infty,\infty\right]$ is called a monetary risk measure, if for all $X, Y\in\financialPositions$ the following properties hold:
			\begin{enumerate}[(i)]
				\item Finiteness at $0$: $\monetaryRiskMeasure{0}\in\realLine$.
				\item Monotonicity: $X\leq Y$ implies $\rho(X) \ge \rho(Y)$.
				\item Cash invariance: For all $m\in\realLine$ we have $\rho(X+ m) = \rho (X) -m$.
			\end{enumerate}
			A monetary risk measure is called convex, if for all $X$, $Y\in\financialPositions$ it holds:
			\begin{enumerate}[(i)]
				\item[(iv)] Convexity: For $\alpha\in(0,1)$ we have $\rho(\alpha X + (1-\alpha)Y) \leq \alpha \rho(X) + (1-\alpha) \rho(Y)$.
			\end{enumerate}
			A convex monetary risk measure is called coherent, if for all $X\in\financialPositions$ it holds:
			\begin{enumerate}[(i)]
				\item[(v)] Positive homogeneity: For $\alpha\geq 0$ we have $\rho(\alpha X) =\alpha \rho(X)$.
			\end{enumerate}
		\end{defi}
		
		\begin{rem}
			\prettyref{defi:monetary_risk_measures} states that risk measures are functionals which map a financial position in form of a random variable to a key figure which should describe the risk of this financial position. These functionals satisfy desirable properties from an economical point of view. We discuss these properties shortly: Monotonicity says that the capital requirement is larger for a smaller financial position. Cash invariance says that if there is a riskless ingredient in the financial position, then the capital requirement can directly be reduced by this constant payoff. Convexity describes that diversification between two financial positions reduces the capital requirement. And positive homogeneity means that a linear scaling of a financial position can also be handled by scaling the capital requirement of the unscaled financial position. 
		\end{rem}
		
		VaR, CVaR, EVaR and RVaR are monetary risk measures. We discuss the EVaR and the RVaR in detail. First, let us state the mathematical precise definition of the EVaR.
		\begin{defi}[Expectiles and EVaR]
			Let $\level\in(0,1)$, $\financialPositions\subset\symbolLpspace{1}$ and $X\in\financialPositions$. The $\level$-expectile $\expectile{\level}{X}$ of $X$ is the unique solution of
			\begin{align*}
				\level\expectation{\max\{X-\expectile{\level}{X},0\}}=(1-\level)\expectation{-\min\{X-\expectile{\level}{X},0\}}.
			\end{align*}
			The Expectile-VaR (EVaR) at level $\level$ is then defined by $\expectileVaR{\level}{X}\defgl -\expectile{\level}{X}$.
		\end{defi}
		
		\begin{rem}
			For a choice of  $\level\leq\frac{1}{2}$, the EVaR is a coherent monetary risk measure, see e.g.,~\textcite{bellini_risk_2017}. Moreover, expectiles are the only generalized quantiles that lead to coherent monetary risk measures, see~\textcite[Proposition 6]{bellini_generalized_2014}.
		\end{rem}
		
		Now, we state another interpretation of expectiles. They are special cases of the \textit{zero utility premium principle}. Therefore, we assume a loss function $l:\realLine\rightarrow\realLine$ satisfying specific properties\footnote{For instance,~\textcite[Definition 4.111]{follmer_stochastic_2016} define a loss function to be increasing and not identically constant.} and a random variable $X\in\financialPositions$ representing the profit and loss of an agent at a future time point. The zero utility premium principle states that the premium $p$ for covering $X$ should satisfy the following equation: $\expectation{l(-X-p)}=0$.
		
		Expectiles are the special case in which $l$ is given by the following relation with respect to a scalar $\level\in\left(0,1\right)$:
		\begin{align*}
			l(x)=
			\begin{cases}
				(1-\level)x & ,x>0,\\
				\level x & ,x\leq 0.
			\end{cases}
		\end{align*}
		
		Next, we prove properties of the function $h_{X,\alpha}$, which we already mentioned in~\prettyref{sec:bisectionExpectiles}. If the lower and upper bounds of a distribution lead to different signs with respect to the function $x\mapsto h_{X,\level}\left(x\right)-x$, then they are suitable as starting points for the bisection search algorithm. The upcoming result states that this situation is satisfied.
		
		Note, on a quantum computer we use circuits to describe bounded distributions, which act as approximations for unbounded distributions. Hence, it is enough to state the upcoming result only for bounded random variables. We denote the essential infimum, respectively the essential supremum, of a random variable $X$ by $\essinf X$, respectively $\esssup X$.
		
		\begin{prop}[Starting values for bisection search algorithm]\label{prop:startingValuesDifferentSigns}
			Let $\level\in\left[\frac{1}{2},1\right)$ and $X\in\symbolLinftyspace$. Then it holds that
			\begin{align*}
				h_{X,\level}\left(\essinf X\right)\geq\essinf X\quad \text{and}\quad h_{X,\level}\left(\esssup X\right)\leq\esssup X.
			\end{align*}
		\end{prop}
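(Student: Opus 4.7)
The plan is to exploit the simple decomposition
\begin{align*}
\max\{(1+\beta)X-\beta x,X\} = X + \beta\max\{X-x,0\},
\end{align*}
which lets us rewrite $h_{X,\alpha}(x)=\expectation{X}+\beta\,\expectation{\max\{X-x,0\}}$. Since $\alpha\geq \tfrac{1}{2}$, the constant $\beta=\tfrac{2\alpha-1}{1-\alpha}$ is nonnegative, which is the only place where the assumption on $\alpha$ enters. Both inequalities then reduce to elementary consequences of $\essinf X \leq X \leq \esssup X$ almost surely.

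First I would treat the lower bound $x=\essinf X$. Because $X\geq \essinf X$ $P$-a.s., we have $\max\{X-\essinf X,0\}=X-\essinf X$ almost surely, so
\begin{align*}
h_{X,\alpha}(\essinf X) = \expectation{X}+\beta\bigl(\expectation{X}-\essinf X\bigr) = (1+\beta)\expectation{X}-\beta\essinf X.
\end{align*}
Subtracting $\essinf X$ gives $(1+\beta)(\expectation{X}-\essinf X)\geq 0$, which holds since $1+\beta\geq 1>0$ and $\expectation{X}\geq\essinf X$ by monotonicity of the expectation.

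Next I would treat the upper bound $x=\esssup X$. Now $X\leq \esssup X$ $P$-a.s., hence $\max\{X-\esssup X,0\}=0$ almost surely, and the expectation term drops out entirely:
\begin{align*}
h_{X,\alpha}(\esssup X) = \expectation{X}+\beta\cdot 0 = \expectation{X} \leq \esssup X,
\end{align*}
again by monotonicity of the expectation.

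There is really no hard step here; the whole argument is an identity plus two applications of $\essinf X\leq \expectation{X}\leq \esssup X$. The only point that deserves a sentence of care is why the decomposition produces a nonnegative multiplier of the $\max$-term, i.e.\ why $\beta\geq 0$; this is exactly the hypothesis $\alpha\geq \tfrac{1}{2}$, and together with the continuity of $x\mapsto h_{X,\alpha}(x)-x$ (which follows from dominated convergence since $X\in L^\infty$) it guarantees that Algorithm~1 is started with values of opposite sign and can be iterated by bisection.
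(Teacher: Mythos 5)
Your proof is correct and follows essentially the same route as the paper: the identity $h_{X,\level}(x)=\expectation{X}+\beta\,\expectation{\max\{X-x,0\}}$ with $\beta\geq 0$ yields exactly the paper's two displayed equalities, since $1+\beta=\frac{\level}{1-\level}$. The only difference is that you make the decomposition and the sign of $\beta$ explicit, whereas the paper states the resulting identities without derivation.
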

		
		\begin{proof}
			For the essential infimum we obtain
			\begin{align*}
				h_{X,\level}\left(\essinf X\right)-\essinf X = \frac{\level}{1-\level}\Big(\expectation{X}-\essinf X\Big)\geq 0.
			\end{align*}
			For the essential supremum it holds that
			\begin{align*}
				h_{X,\level}\left(\esssup X\right)-\esssup X = \expectation{X}-\esssup X\leq 0.
			\end{align*}
		\end{proof}
		
		
		\begin{prop}[Continuity of $h_{X,\level}$]\label{prop:continuityOf_h}
			Let $\level\in\left[\frac{1}{2},1\right)$ and $X\in\symbolLspace$. The function $h_{X,\level}$ is continuous with respect to the Euclidean norm. 
		\end{prop}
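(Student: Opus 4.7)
The plan is to reduce $h_{X,\alpha}$ to an affine function of the expected positive part $x \mapsto E[\max\{X-x,0\}]$ and establish continuity (indeed Lipschitz continuity) of that map.

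First, I would simplify the integrand. Since $\alpha \geq \tfrac{1}{2}$ gives $\beta = \tfrac{2\alpha-1}{1-\alpha} \geq 0$, we can write
\begin{align*}
\max\{(1+\beta)X - \beta x, X\} = X + \max\{\beta(X-x), 0\} = X + \beta \max\{X-x, 0\}.
\end{align*}
Taking expectations, and using that $X \in L^1$ so $E[X]$ is finite,
\begin{align*}
h_{X,\alpha}(x) = E[X] + \beta\, E[\max\{X-x, 0\}].
\end{align*}
Thus continuity of $h_{X,\alpha}$ reduces to continuity of the single map $\psi(x) \defgl E[\max\{X-x,0\}]$.

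Second, I would show $\psi$ is Lipschitz with constant $1$. The pointwise estimate
\begin{align*}
|\max\{X-x,0\} - \max\{X-y,0\}| \leq |x-y|
\end{align*}
holds because $t \mapsto \max\{t,0\}$ is $1$-Lipschitz. Taking expectations (the right-hand side is a constant and hence integrable) yields
\begin{align*}
|\psi(x) - \psi(y)| \leq E[|\max\{X-x,0\} - \max\{X-y,0\}|] \leq |x-y|.
\end{align*}
Consequently $h_{X,\alpha}$ is Lipschitz with constant $\beta$, hence continuous with respect to the Euclidean norm.

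There is no real obstacle: the only subtle point is justifying the exchange between expectation and the limit $x_n \to x$, but this is immediate from the uniform Lipschitz bound above (or, alternatively, from dominated convergence with dominating function $|X| + \sup_n |x_n|$, which is in $L^1$ since $X \in L^1$). I would state the result as Lipschitz continuity, since this is slightly stronger and follows with the same effort.
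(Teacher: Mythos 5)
Your proof is correct, and it takes a genuinely different route from the paper's. The paper argues sequentially: for $x_n \to x$ it shows pointwise convergence of $Y_n = \max\{(1+\beta)X-\beta x_n, X\}$ to the corresponding limit and then invokes dominated convergence, with dominating function $|X| + |\beta|\max\{X - \inf_n x_n, 0\} \in \symbolLspace$ (the paper writes $\min_n x_n$, which exists as a finite infimum since a convergent sequence is bounded). You instead exploit the algebraic identity $\max\{(1+\beta)X-\beta x, X\} = X + \beta\max\{X-x,0\}$, valid because $\level \geq \tfrac12$ forces $\beta = \tfrac{2\level-1}{1-\level}\geq 0$, to reduce the claim to the map $\psi(x)=\expectation{\max\{X-x,0\}}$, and then obtain the uniform bound $|\psi(x)-\psi(y)|\leq |x-y|$ from the $1$-Lipschitz property of $t\mapsto\max\{t,0\}$. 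This buys you a quantitatively stronger conclusion ($h_{X,\level}$ is $\beta$-Lipschitz rather than merely continuous), avoids the dominated convergence theorem entirely, and sidesteps the construction of a dominating function; it is also the same decomposition the paper itself uses earlier to derive the representation $h_{X,\level}$ from the defining equation of the expectile, so it fits naturally into the surrounding text. The paper's argument, on the other hand, does not depend on the sign of $\beta$ or on the specific piecewise-linear form of the integrand, so it would generalize more directly to other loss functions of shortfall type.
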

		
		\begin{proof}
			Assume an arbitrary sequence $\left\{x_n\right\}\subset\realLine$ which converges to a point $x\in\realLine$. Then the sequence $\left\{Y_n\right\}$ defined by
			\begin{align*}
				Y_n = \max\left\{\left(1+\beta\right)X-\beta x_n,X\right\}\in\symbolLspace,
			\end{align*} 
			converges pointwise to 
			\begin{align*}
				Y = \max\left\{\left(1+\beta\right)X-\beta x,X\right\}\in\symbolLspace.
			\end{align*} 
			Furthermore, we have for each $n\in\naturalNumbers$ that
			\begin{align*}
				\left|\max\left\{\left(1+\beta\right)X-\beta x_n,X\right\}\right|\leq\left|X\right|+\left|\beta\right|\max\left\{X-\min\limits_{n\in\naturalNumbers}x_n,0\right\}\in\symbolLspace.
			\end{align*} 
			Hence, by dominated convergence $Y_n\xrightarrow{\symbolLspace}Y$,
			i.e.,~$\expectation{Y_n}\rightarrow\expectation{Y}$ and we obtain that $h_{X,\level}$ is continuous.
		\end{proof}
		
		For the sake of completeness, we also state the mathematical precise definition of the RVaR, which is orientated on~\textcite[Definition 2.1]{fissler_elicitability_2021}.
		\begin{defi}[Range Value-at-Risk]
			The Range Value-at-Risk (RVaR) of a payoff  $X\in\financialPositions\subset\symbolLpspace{1}$ at levels $0<\alpha<\beta< 1$ is defined by\footnote{Here, $\valueAtRisk{\lambda}{X}$ refers to the negative upper quantile function of a random variable $X$ at level $\lambda$. For more information about quantile functions we refer to~\textcite[Appendix A.3]{follmer_stochastic_2016}.}.
			\begin{align*}
				\rangeValueAtRisk{\alpha,\beta}{X} \defgl \frac{1}{\beta - \alpha}\,\int_\alpha^\beta \valueAtRisk{u}{X} \diff u.
			\end{align*}
		\end{defi}
		
		\begin{rem}
			We obtain the following limit behavior: $\lim_{\alpha\uparrow \beta}\rangeValueAtRisk{\alpha,\beta}{X} = \valueAtRisk{\beta}{X}$,
			i.e.,~the RVaR converges to the VaR at level $\beta$ for $\alpha\rightarrow\beta$. The Range Value-at-Risk is not coherent, due to the missing convexity.
		\end{rem}

		\section{Loading probability density functions}\label{sec:distributionLoading}
		
		To load a distribution, we perform a state transformation of the first basis vector $\ket{0}$ to a vector $\ket{a}$ representing the probability density function (PDF) of a random variable. Let us assume a number $n$ of qubits and that $\left\{p_i\right\}_{i\in\left\{0,\dots,2^n-1\right\}}$ are the density values with respect to~$\ket{a}$. Then we obtain
		\begin{align*}
			\ket{a}=\sum_{i=0}^{2^n-1}\sqrt{p_i}\ket{i}.
		\end{align*}
		
		So, we aim to find an operator $\tilde{A}$ such that $\tilde{A}\ket{0}=\ket{a}$. The procedure in~\textcite{mottonen_transformation_2004} describes the calculation of an operator $A$ such that~$A\ket{a}=\ket{0}$. Hence, $\tilde{A}$ is equal to the complex conjugate of $A$, which we denote by $A^{\dagger}$.
		
		For the sake of brevity, we use the same notations as in~\textcite[Section III]{mottonen_transformation_2004} without introducing them explicitly. Their procedure is more general and two steps are performed. In our case, the first step is not necessary, because we only consider real values $p_i$, i.e.,~$\omega_1=...=\omega_N=0$.
		
		Therefore, we rely on the second step, which performs multiple uniformly controlled y-rotations. The term uniformly says that the y-rotations are uniformly distributed with respect to the possible states of the controlled qubits. We denote a uniformly controlled y-rotation for a target qubit $m$, a number $k$ of controlled qubits and a vector $\alpha$ of angles by $F_m^k(y,\alpha)$. 
		
		The operator $A$ is calculated via~\textcite[Eq.~(7)]{mottonen_transformation_2004}, which simplifies to
		\begin{align}\label{eq:operator_A}
			A\ket{a}=\left(\prod_{j=1}^{n}F_j^{j-1}(y,\alpha_{n-j+1})\otimes I_{2^{n-j}}\right)\ket{f}=\ket{0},
		\end{align}
		with angles 
		\begin{align}\label{eq:operator_A_angles}
			\alpha_{i,n-j+1}=2\arcsin\left(\sqrt{\frac{\sum\limits_{l=0}^{2^{n-j}-1}p_{(2i-1)2^{n-j}+l}}{\sum\limits_{l=0}^{2^{n-j+1}-1}p_{(i-1)2^{n-j+1}+l}}}\right),\quad i\in\left\{1,\dots,2^{j-1}\right\}.
		\end{align}
		
		In our setting, we can interpret these angles as (conditional) probabilities as the next example shows.		
		\begin{exam}[$j=2,i\in\left\{1,2\right\}$]
			Here, we have one control qubit and therefore two angles:
			\begin{align*}
				&\alpha_{1,n-1}=2\arcsin\left(\sqrt{\frac{\sum\limits_{l=0}^{2^{n-2}-1}p_{2^{n-2}+l}}{\sum\limits_{l=0}^{2^{n-1}-1}p_{l}}}\right),
				&\alpha_{2,n-1}=2\arcsin\left(\sqrt{\frac{\sum\limits_{l=0}^{2^{n-2}-1}p_{(^{n-1}+2^{n-2}+l}}{\sum\limits_{l=0}^{2^{n-1}-1}p_{2^{n-1}+l}}}\right).
			\end{align*}
			The expressions in the square roots are conditional probabilities. For instance, $\alpha_{1,n-1}$ is the probability to realize one of the values $\left\{2^{n-2},2^{n-2}+1,\dots,2^{n-1}-1\right\}$ (right half of interval $\left[0,2^{n-1}-1\right]$) under the condition that one of the values $\left\{0,1,\dots,2^{n-1}-1\right\}$ is realized.
		\end{exam}

	\begin{figure}[htp]
		\centering
		\begin{subfigure}{0.44\textwidth}
			\[
			\begin{array}{c}
				\Qcircuit @C=1.5em @R=1.2em {
					&\ctrlbo{1} & \ctrlbo{1} & \ctrlbo{1} & \gate{R_y^1} & \qw\\
					&\ctrlbo{1} & \ctrlbo{1} & \gate{R_y^2} & \qw & \qw\\
					&\ctrlbo{1} & \gate{R_y^3} & \qw        & \qw & \qw\\
					&\gate{R_y^4} & \qw        & \qw        & \qw & \qw 
				}
			\end{array}
			\]
			\caption{Illustration of operator $A$ for $n=4$ qubits. It is a sequence of uniformly controlled y-rotations. $R_y^j$ corresponds to the index $j$ in the product in~\prettyref{eq:operator_A}.}
			\label{fig:operator_A}		
		\end{subfigure}
		\hfill
		\begin{subfigure}{0.54\textwidth}
			\[
			\begin{array}{c}
				\Qcircuit @C=1.5em @R=1.2em {
					&\gate{\bar{R}_y^1} & \ctrlbo{1}         & \ctrlbo{1}         & \ctrlbo{1} & \qw\\
					&\qw                & \gate{\bar{R}_y^2} & \ctrlbo{1}         & \ctrlbo{1} & \qw\\
					&\qw                & \qw                & \gate{\bar{R}_y^3} & \ctrlbo{1} & \qw\\
					&\qw                & \qw                & \qw                & \gate{\bar{R}_y^4} & \qw 
				}
			\end{array}
			\]
			\caption{Illustration of operator $A^{\dagger}$ for $n=4$ qubits. It is a sequence of the conjugates of the uniformly controlled y-rotations, which we denote by $\bar{R}_y^j$.}
			\label{fig:operator_conjugate_A}
		\end{subfigure}
		\caption{Illustrations of operators $A$ and $A^{\dagger}$ for $n=4$ qubits.}
		\label{fig:operators_A_and_conjugate_A}
	\end{figure}
	
		Next, we develop $A^{\dagger}$. In~\prettyref{fig:operator_A} we illustrate the operator $A$. Here, we use the same symbols as in~\textcite{mottonen_transformation_2004} to illustrate uniformly controlled y-rotations, compare also~\prettyref{fig:uniformly_controlled_rotations}.
		
		As a consequence of~\prettyref{fig:operator_A} the conjugate operator $A^{\dagger}$ corresponds to the circuit in~\prettyref{fig:operator_conjugate_A}. Therefore, we rely on the conjugates of the uniformly controlled y-rotations, which are  uniformly controlled y-rotations with modified angles. So, if we understand the conjugate of a uniformly controlled y-rotation, we know the operator $A^{\dagger}$ and we can load the PDF. Therefore, in a first step we explain in more detail the construction of a circuit for a uniformly controlled y-rotation. In a second step we build its conjugate.\newline
		
		\textbf{Step 1:} A uniformly controlled y-rotation is illustrated on the left-hand side in~\prettyref{fig:uniformly_controlled_rotations}. So, we see that it consists of a rotation with different angles $\alpha_0,\dots,\alpha_ {2^{n}-1}$ for the possible states of the control qubits. The idea in~\textcite{mottonen_quantum_2004} is to reformulate the circuit using simple rotations with angles $\theta_0,\dots,\theta_{2^{n}-1}$ and CNOT-gates, see right-hand side in~\prettyref{fig:uniformly_controlled_rotations}.
		
		\begin{figure}[htp]
			\centering
			\begin{subfigure}{0.44\textwidth}
				\includegraphics[width=\linewidth]{./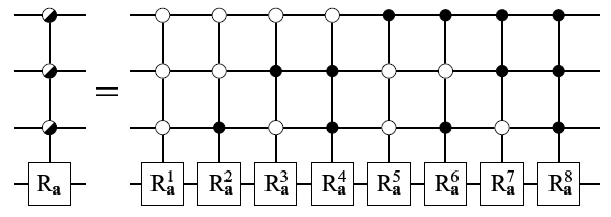}
				\caption{Uniformly controlled rotation around axis $a$ for three control qubits. Black (respectively white) control bits stand for $1$ (respectively $0$).}		
			\end{subfigure}
			\hfill
			\begin{subfigure}{0.54\textwidth}
				\includegraphics[width=\linewidth]{./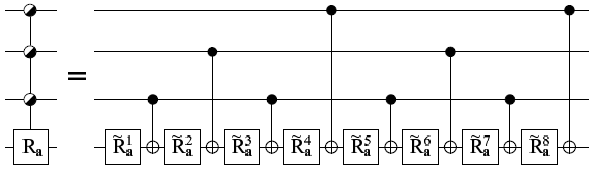}
				\caption{Uniformly controlled rotation via simple rotations and CNOT-gates.}
			\end{subfigure}
			\caption{\textit{Left-hand side:} Taken from~\textcite[Figure 1]{mottonen_quantum_2004}. \textit{Right-hand side:} Taken from~\textcite[Figure 2 (a)]{mottonen_quantum_2004}.}
			\label{fig:uniformly_controlled_rotations}
		\end{figure}
		
		To translate the $\alpha$-angles into the $\theta$-angles we have to solve a linear system of equations. In the case of $k$ control qubits we write it as $M^k(\theta_0\, \dots\, \theta_{2^n-1})^{\intercal}
		=(\alpha_0\, \dots\,\alpha_{2^n-1})^{\intercal}$.
		
		A row in this system stands for one possible state of the control qubits. Hence, the matrix $M^k$ is calculated by requiring that the rotations in both circuits in~\prettyref{fig:uniformly_controlled_rotations} are the same for all possible control qubit states. We can use binary and gray codes to calculate $M^k$. Hence, let us denote by $b_i$ the binary representation of $i$ and by $g_j$ the $j$-bit string in the binary gray code. Then, it holds that $M^k_{ij}=(-1)^{b_{i-1}\cdot g_{j-1}}.$

		Row $i$ of $M^k$ indicates the state of the control qubits given by $b_{i-1}$. $b_{i-1}$ tells us which of the control qubits are active and which are not. The columns of $M^k$ indicate the digit sequences of the gray code. 
		
		Now one could wonder why this procedure for determining $M^k$ works. We only exemplify the reason for it.   In~\prettyref{fig:gray_code} the gray code corresponding to the circuit in~\prettyref{fig:uniformly_controlled_rotations} is illustrated, i.e.,~three control qubits are used. Assume that only the first and third control qubits are active. Note, that the y-rotation for $\theta_0$ is not affected by the control qubits. But in the next time step the gray code changes for the first control qubit. This is the transition from $g_0$ to $g_1$. Since the first qubit is active, the NOT-gate is applied. So we definitely perform the y-rotation with $-\theta_1$. But when does the next NOT-gate is applied? Or in other words: When does the gray code changes the next time for an active control qubit? This is not the case for $g_2$, because the second control qubit is not active. But it is the case for $g_3$, because the third control qubit is active. Hence, we apply y-rotations with respect to $-\theta_2$ and $\theta_3$. 
		
		\begin{figure}[htp]
			\includegraphics[scale=0.8]{./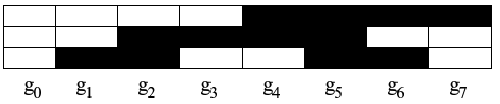}
			\caption{Gray code for three control bits. The bottom (respectively top) row corresponds to the first (respectively third) control qubit. Taken from~\textcite[Figure 2 (b)]{mottonen_quantum_2004}.}		
			\label{fig:gray_code}
		\end{figure}
		
		To determine $\theta$ from $\alpha$ we have to find the inverse of $M^k$. Now $2^{-\frac{k}{2}}M^k$ is orthogonal, see~\textcite{mottonen_quantum_2004}. Hence, $2^{-k}\left(M^k\right)^{T}M^k=\left(2^{-\frac{k}{2}}M^k\right)^{T} 2^{-\frac{k}{2}}M^k=I$, and $\left(M^k\right)^{-1}=2^{-k}\left(M^k\right)^{T}$. Finally, we are in a position to perform a uniformly controlled y-rotation by only knowing the $\alpha$-values.
		
		\textbf{Step 2:} Remember, our aim is to determine the conjugate of a uniformly controlled y-rotation. For brevity, we only exemplify it for two control qubits, i.e.,~$k=2$. The conjugate is given by the circuit in~\prettyref{fig:uniformly_controlled_rotations_2_control_qubit_conjugate}.
		
		\begin{figure}[htp]
			\[
			\begin{array}{c}
				\Qcircuit @C=1.5em @R=1.2em {
					& \ctrl{2} & \qw & \qw & \qw & \ctrl{2} & \qw & \qw & \qw & \qw \\
					& \qw & \qw & \ctrl{1} & \qw & \qw & \qw & \ctrl{1} & \qw & \qw        \\
					& \targ & \gate{R_y\left(-\theta_3\right)} & \targ & \gate{R_y\left(-\theta_2\right)} & \targ & \gate{R_y\left(-\theta_1\right)} & \targ & \gate{R_y\left(-\theta_0\right)} & \qw\\
				}
			\end{array}
			\]
			\caption{Conjugate of uniformly controlled y-rotation with $2$ control qubits.}
			\label{fig:uniformly_controlled_rotations_2_control_qubit_conjugate}
		\end{figure}
		
		As mentioned in~\textcite[page 2]{mottonen_quantum_2004}, the uniformly controlled rotation ``can also be achieved by a horizontally mirrored version of the quantum circuit presented''. The circuit in~\prettyref{fig:uniformly_controlled_rotations_2_control_qubit_conjugate} is of the same form. Hence, the conjugate can also be mirrored. Therefore, we obtain the horizontally mirrored version of the conjugate in~\prettyref{fig:uniformly_controlled_rotations_2_control_qubit_conjugate_2}. This is a uniformly controlled y-rotation with modified angles.
		
		\begin{figure}[htp]
			\[
			\begin{array}{c}
				\Qcircuit @C=1.5em @R=1.2em {
					& \qw & \qw & \qw & \ctrl{2} & \qw & \qw & \qw & \ctrl{2} & \qw \\
					& \qw & \ctrl{1} & \qw & \qw & \qw & \ctrl{1} & \qw & \qw & \qw        \\
					& \gate{R_y\left(-\theta_0\right)} & \targ & \gate{R_y\left(-\theta_1\right)} & \targ & \gate{R_y\left(-\theta_2\right)} & \targ & \gate{R_y\left(-\theta_3\right)} & \targ & \qw\\
				}
			\end{array}
			\]
			\caption{Horizontally mirrored version of the conjugate of the uniformly controlled y-rotation with $2$ control qubits.}
			\label{fig:uniformly_controlled_rotations_2_control_qubit_conjugate_2}
		\end{figure}
		
		In a last step, we calculate the new $\alpha$-values corresponding to the negative angles $-\theta$. For the negative value of a $\theta$-vector it holds that $-\theta = -2^{-k}\left(M^k\right)^{T}\alpha = 2^{-k}\left(M^k\right)^{T}\left(-\alpha\right)$. Hence, we are interested in a compact formula for $-\alpha$. We obtain with the help of~\prettyref{eq:operator_A_angles} that
		\begin{align*}
			-\alpha_{i,n-j+1}=2\arccos\left(\sqrt{\frac{\sum\limits_{l=0}^{2^{n-j}-1}p_{(2i-1)2^{n-j}+l}}{\sum\limits_{l=0}^{2^{n-j+1}-1}p_{(i-1)2^{n-j+1}+l}}}\right)-\pi.
		\end{align*} 
		
		If we transform the $\alpha$-values to the $\theta$-values we obtain a $\pi$-shift of the form $c\pi$ with an even integer $c$. Hence, this $\pi$-shift corresponds to multiple $360^{\circ}$ y-rotations and we can omit it in the calculation of $-\alpha_{i,n-j+1}$. Therefore, we finally work with the following modified angles:
		\begin{align*}
			\tilde{\alpha}_{i,n-j+1}=2\arccos\left(\sqrt{\frac{\sum\limits_{l=0}^{2^{n-j}-1}p_{(2i-1)2^{n-j}+l}}{\sum\limits_{l=0}^{2^{n-j+1}-1}p_{(i-1)2^{n-j+1}+l}}}\right).
		\end{align*}
		
		\section{Figures and tables}\label{sec:figures}
		
		The figures show the convergence behavior with respect to the lognormal distribution.
		
		\begin{figure}[!ht]
			\centering
			\includegraphics[scale=0.4]{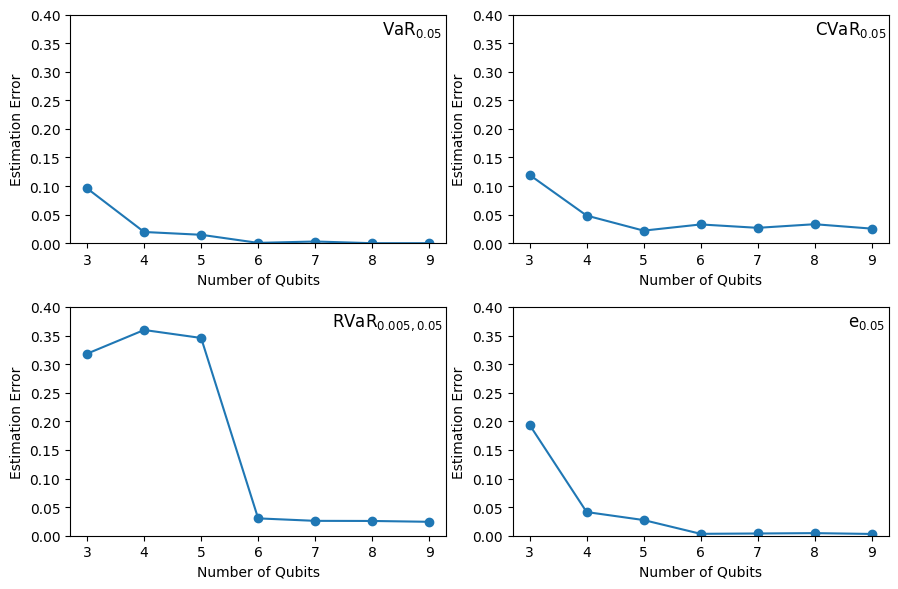}
			\caption{The estimation error on a simulator as a function of the number of qubits used 
				for loading the distribution. The error is given relatively to the length of the domain on 
				which $f$ is defined. For the results of this plot we applied the IQAE with 
				target precision $0.05$ and confidence level $0.01$ to $LN(9.6754,0.7416^2)$.  
				We used $\gamma = \pi/8$ for VaR and CVaR and $\gamma = \pi/4$ for RVaR and expectiles.}
			\label{fig:sim_result_appendix_lognorm}
		\end{figure}
		
		\begin{figure}[!ht]
			\centering
			\includegraphics[scale=0.45]{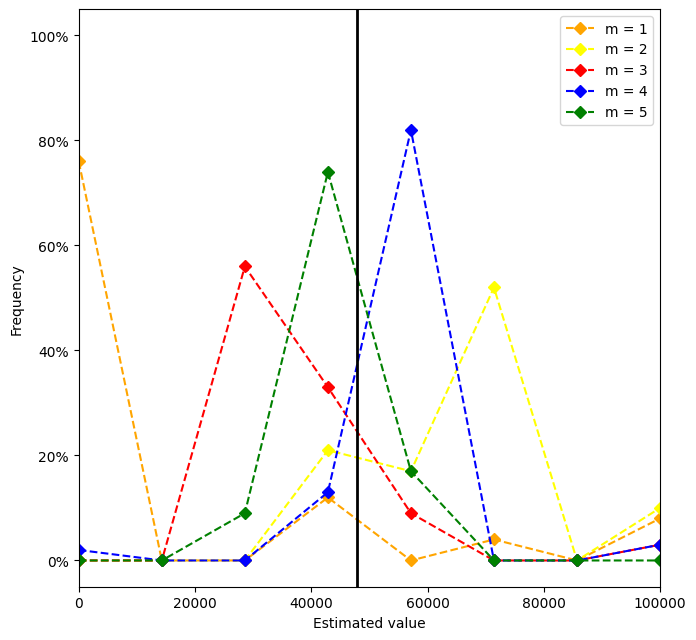}
			\caption{Estimation for the expectile calculated with the canonical QAE on a simulator.
				We used $\gamma = \pi/4$. The underlying distribution is $LN(9.6754,0.7416^2)$. 
				The exact value is indicated by the black line. With an increased number of ancilla qubits $m$, the most frequent result approaches the exact value.}
			\label{fig:canonical_lognorm}
		\end{figure}		
		
	\end{appendix}
		
\end{document}